\documentclass[preprint]{elsarticle}

\usepackage{lineno,hyperref}
%\modulolinenumbers[5]

\journal{Journal of Computational Physics}

%%%%%%%%%%%%%%%%%%%%%%%
%% Elsevier bibliography styles
%%%%%%%%%%%%%%%%%%%%%%%
%% To change the style, put a % in front of the second line of the current style and
%% remove the % from the second line of the style you would like to use.
%%%%%%%%%%%%%%%%%%%%%%%

%% Numbered
%\bibliographystyle{model1-num-names}

%% Numbered without titles
%\bibliographystyle{model1a-num-names}

%% Harvard
%\bibliographystyle{model2-names.bst}\biboptions{authoryear}

%% Vancouver numbered
%\usepackage{numcompress}\bibliographystyle{model3-num-names}

%% Vancouver name/year
%\usepackage{numcompress}\bibliographystyle{model4-names}\biboptions{authoryear}

%% APA style
%\bibliographystyle{model5-names}\biboptions{authoryear}

%% AMA style
%\usepackage{numcompress}\bibliographystyle{model6-num-names}

%% `Elsevier LaTeX' style
\bibliographystyle{elsarticle-num}
\biboptions{numbers,sort&compress}
%%%%%%%%%%%%%%%%%%%%%%%

%%%% user packages %%%%
\usepackage{amsmath}
\usepackage{amsfonts}
\usepackage{amssymb}
\usepackage{amsthm}
\usepackage{algorithm}
\usepackage{algcompatible}
\usepackage{tikz}
\usepackage{caption}
\usepackage{subcaption}
\usepackage{amscd,mathrsfs} 
\usepackage{xcolor}
\usepackage[margin=1in]{geometry}
\usepackage{comment}
%%%% end of user packages %%%%

%%%% user definition %%%%

\newcommand{\tp}[1]{{#1}^{\mathsf T}}
\renewcommand{\bar}{\overline}
\newcommand{\eps}{\epsilon}
\newcommand{\pa}{\partial}

\renewcommand{\eps}{\varepsilon}
\renewcommand{\epsilon}{\varepsilon}
\renewcommand{\Sigma}{\varSigma}

\newcommand{\Co}{\mathcal{C}}
\newcommand{\aCo}{\mathcal{\hat C}}
\newcommand{\Ko}{\mathcal{K}}
\newcommand{\aKo}{\mathcal{\hat K}}
\newcommand{\Ho}{\mathcal{H}}
\newcommand{\aHo}{\mathcal{\hat H}}
\newcommand{\Io}{\mathcal{I}}
\newcommand{\Uo}{\mathcal{U}}
\newcommand{\Vo}{\mathcal{V}}
\newcommand{\Hi}{\mathbb{X}}
\newcommand{\bbR}{\mathbb R}

\makeatletter 
 
\@addtoreset{algorithm}{section} 
\makeatother

\newtheorem{thm}[algorithm]{Theorem}

\newtheorem{cor}[algorithm]{Corollary}

\newtheorem{asump}[algorithm]{Assumption}

\newtheorem{rk}[algorithm]{Remark}
\newcommand{\E}{\mathrm E}

\newcommand{\Cov}{\mathrm{Cov}}

\DeclareMathAlphabet\mathbfcal{OMS}{cmsy}{b}{n}
\newcommand{\Imag}{\mathrm{Im}}

\setcounter{secnumdepth}{4}

\newcommand{\diff}{\text{d}}

\newcommand{\mean}[1]{\overline{#1}}

\newcommand{\half}{\frac{1}{2}}
\newcommand{\pdiff}[2]{\frac{\partial #1}{\partial #2}}

\algnewcommand{\IfThenElse}[3]{% \IfThenElse{<if>}{<then>}{<else>}
  \State \algorithmicif\ #1\ \algorithmicthen\ #2\ \algorithmicelse\ #3}
%%%% end of user definition %%%%

%%%%% set figure path %%%%
%\ifpdf
%    \graphicspath{{./figure/PNG/}{./figure/PDF/}{./figure/}}
%\else
%    \graphicspath{{./figure/EPS/}{./figure/}}
%\fi
%%%%% end of figure path %%%%

\begin{document}

\begin{frontmatter}

\title{Adaptive Dimension Reduction to Accelerate Infinite-Dimensional Geometric Markov Chain Monte Carlo}
%\tnotetext[mytitlenote]{Fully documented templates are available in the elsarticle package on \href{http://www.ctan.org/tex-archive/macros/latex/contrib/elsarticle}{CTAN}.}

%% Group authors per affiliation:
%\author{Elsevier\fnref{myfootnote}}
%\address{Radarweg 29, Amsterdam}
%\fntext[myfootnote]{Since 1880.}

%% or include affiliations in footnotes:

\author[UIUC]{Shiwei Lan\corref{cor}}
\ead{shiwei@illinois.edu}
%\ead[u]{http://shiwei.stat.illinois.edu}

%\author[CaltechCMS]{Andrew M. Stuart\corref{cor}}
%\ead{astuart@caltech.edu}

\cortext[cor]{Corresponding author}

\address[UIUC]{Department of Statistics, University of Illinois Urbana-Champaign, Champaign, IL 61820, USA}

\begin{abstract}
Bayesian inverse problems highly rely on efficient and effective inference methods for uncertainty quantification (UQ).
Infinite-dimensional MCMC algorithms, directly defined on function spaces, are robust under refinement (through discretization, spectral approximation) of physical models. 
Recent development of this class of algorithms has started to incorporate the geometry of the posterior informed by data
so that they are capable of exploring complex probability structures, as frequently arise in UQ for PDE constrained inverse problems.
However, the required geometric quantities, including the Gauss-Newton Hessian operator or Fisher information metric, are usually expensive to obtain in high dimensions. 
On the other hand, most geometric information of the unknown parameter space in this setting is concentrated in an \emph{intrinsic} finite-dimensional subspace. 
To mitigate the computational intensity and scale up the applications of infinite-dimensional geometric MCMC ($\infty$-GMC), 
we apply geometry-informed algorithms to the intrinsic subspace to probe its complex structure,
and simpler methods like preconditioned Crank-Nicolson (pCN) to its geometry-flat complementary subspace. % (DILI, Cui et~at 2016).

In this work, we take advantage of dimension reduction techniques %(Halko et~at 2011)
to accelerate the original $\infty$-GMC algorithms. % (Beskos et~al 2017).
More specifically, partial spectral decomposition (e.g. through randomized linear algebra) of the (prior or Gaussian-approximate posterior) covariance operator is used to identify certain number of principal eigen-directions as a basis for the intrinsic subspace.
The combination of dimension-independent algorithms, geometric information, and dimension reduction yields more efficient implementation, \emph{(adaptive) dimension-reduced infinite-dimensional geometric MCMC}. 
With a small amount of computational overhead, we can achieve over 70 times speed-up compared to pCN using a simulated elliptic inverse problem and an inverse problem involving turbulent combustion with thousands of dimensions after discretization.
A number of error bounds comparing various MCMC proposals are presented to predict the asymptotic behavior of the proposed dimension-reduced algorithms.
\end{abstract}

\begin{keyword}
Infinite-dimensional Geometric Markov Chain Monte Carlo; High-dimensional sampling; Dimension Reduction; Bayesian Inverse Problems; Uncertainty Quantification.
%\MSC[2010] 00-01\sep  99-00
\end{keyword}

\end{frontmatter}

%\linenumbers

%%%%%%%%%%%%%%%%%%%%%%%%%%%%%%%%%%%%%%%%%%

\section{Introduction}

Sampling from posterior distributions in the context of PDE-constrained inverse problems is typically a challenging task due to the high-dimensionality of 
the target, the non-Gaussianity of the posterior and the intensive computation of repeated PDE solutions for evaluating the likelihood function at different parameters. 
Traditional Metropolis-Hastings algorithms are characterized by deteriorating mixing times upon mesh-refinement in the finite-dimensional projection of parameter $u$.
This has prompted the recent development of a class of `dimension-independent' MCMC methods \citep{beskos08, beskos11, cotter13, law2014, beskos14, pinski15, rudolf2018, cui16, constantine2016, beskos2017} that overcome this deficiency.
Compared to traditional Metropolis-Hastings defined on the finite-dimensional space, the new algorithms are well-defined on the infinite-dimensional Hilbert space,
thus yield the important computational benefit of mesh-independent mixing times for the practical finite-dimensional algorithms run on the computer.

Among those works, \cite{beskos14, cui16, beskos2017} incorporate geometry of the posterior informed by data to empower the MCMC to become more capable of exploring complicated distributions that deviate significantly from Gaussian. In particular, infinite-dimensional geometric MCMC ($\infty$-GMC) \cite{beskos2017} put a series of `dimension-independent' MCMC algorithms in the context of increasingly adopting geometry (gradient, Hessian).
With the help of such geometric information, \cite{beskos2017} show that with the prior-based splitting strategy, $\infty$-GMC algorithms can achieve up to two orders of magnitude speed up in sampling efficiency compared to vanilla pCN.
However, fully computing the required geometric quantities is prohibitive in the discretized parameter space with thousands of dimensions.
Therefore, it is natural to consider approximations to the gradient vector and Hessian (Fisher) matrix and compute them in a subspace with reduced dimensions.
The key to the dimension reduction in this setting is to identify an intrinsic low-dimensional subspace and apply geometric methods to effectively explore its complex structure;
while simpler methods can be used on its complementary subspace with larger step sizes.
\cite{cui16} seek the intrinsic subspace, known as \emph{likelihood-informed subspace (LIS)} \citep{cui14}, by detecting the eigen-subspace of some globalized Hessian; %prior-preconditioned Gauss-Newton Hession (ppGNH);
\cite{constantine2016} investigate the \emph{active subspace (AS)} \citep{constantine14,constantine15} by probing the principal eigen-directions of a prior-averaged empirical Fisher matrix.
More recently, \cite{zahm2018} follow the same spirit to exploit the low-dimensional structure, in which the posterior changes the most from the prior.
Their approach is based on approximating the likelihood function with a ridge function that depends non-trivially only on a few linear combinations of the parameters.
Such ridge approximation is obtained by minimizing an upper bound on the Kullback-Leibler distance between the posterior distribution and its approximation.

In this paper, we propose dimension reduction directly based on partial (generalized) spectral decomposition of the prior covariance or the covariance of local Gaussian approximation to the posterior (GAP).
The intrinsic low-dimensional subspace is identified by $r$ leading eigen-functions, which can be efficiently obtained by randomized linear algebraic algorithms \citep{halko11,saibaba16,liberty07}.
Unlike \cite{cui16}, the posterior covariance projected in the subspace is not empirically updated, but rather approximated in a diagonal form which can still capture the most variation of the projected posterior. The resulting GAP covariance has a low-rank structure. Such approximation can be either adopted position-wise or adapted towards a global LIS within the burn-in stage of MCMC. The latter yields a much simpler yet comparable or even more efficient MCMC algorithm compared to \emph{dimension-independent likelihood Informed MCMC (DILI)} \citep{cui16}. The former can demonstrate advantage in sampling complicated posteriors where a globalized pre-conditioner in DILI does not work well universally.
We apply the same dimension reduction to `Hamiltonian Monte Carlo (HMC)' type algorithms so that they can further suppress the diffusive behavior of `MALA' type algorithms and generalize DILI.
We also provide theoretical bounds for comparing dimension-reduced MCMC proposals with their full versions to help understand their asymptotic behavior as well as their difference.

The contributions of this paper are multi-fold.
First, we accelerate the original $\infty$-GMC methods with dimension reduction techniques to scale their applications in PDE constrained inverse problems up to thousands of dimensions.
%Instead of sampling (low) frequency modes \citep{beskos2017}, we obtain posterior samples of the unknown parameter function evaluated over the whole solution domain.
Second, based directly on partial spectral decomposition, we propose more efficient methods that simplify and generalize DILI \citep{cui16}.
We also establish interesting connections between our adaptive algorithm and DILI.
Third, we derive theoretic bounds comparing several dimension-independent MCMC proposals to describe their asymptotic behavior.
Lastly, we demonstrate the numerical advantage of our proposed algorithms in the high-dimensional setting by over 70 times speed-up compared to pCN method using a simulated elliptic inverse problem and an inverse problem of turbulent jet.

The rest of the paper is organized as follows. Section \ref{sec:review} reviews the background of Bayesian inverse problems, infinite-dimensional geometric MCMC ($\infty$-GMC) \citep{beskos2017} and dimension-independent likelihood Informed MCMC (DILI) \citep{cui16}. Section \ref{sec:DR} describes the details of dimension reduction we adopt, based on the prior and the GAP posterior respectively.
Section \ref{sec:DR-alg} applies these dimension reduction techniques to $\infty$-GMC to achieve acceleration,
establishes the validity of the proposed methods, and also provides error bounds for comparing various algorithms.
In Section \ref{sec:numerics} we show the numerical advantage of our algorithms using a simulated elliptic inverse problem and an inverse problem involving turbulent combustion.
Finally we make some discussion and conclude with a few future directions in Section \ref{sec:conclusion}.

%%%%%%%%%%%%%%%%%%%%%%%%%%%%%%%%%%%%%%%%%%
\section{Review of Background}\label{sec:review}

\subsection{Bayesian Inverse Problems}

In Bayesian inverse problems, the objective is to identify an unknown parameter function $u$ which is assumed to be in a separable Hilbert space $(\Hi, \langle\cdot, \cdot \rangle,\lvert\cdot \rvert)$. 
%In this paper, we assume that all probability measures on $\Hi$ are defined on the standard Borel $\sigma$-algebra $\mathcal{B}(\Hi)$.
Given finite-dimensional observations $y\in \mathbb{Y}=\mathbb{R}^{m}$, for $m\ge 1$, $u$ is connected to $y$ through the following mapping:
\begin{equation*}\label{eq:forward}
y=\mathcal G(u) + e\ ,\quad e\sim f \ , 
\end{equation*}
where $\mathcal G:\Hi\mapsto \mathbb{Y}$ is the forward operator that maps unknown parameter $u$ onto the data space $\mathbb{Y}$, 
and $f$ is the distribution of noise $e$.
If we assume the density of noise distribution, still denoted as $f$, to exist with respect to the Lebesgue measure,
then we can define the negative log-likelihood, a.k.a. potential function, $\Phi:\Hi\times \mathbb{Y}\to \mathbb R$ as:
\begin{equation*}
%\label{eq:lik}
\Phi(u;y)=-\log f\big\{\big(y-\mathcal{G}(u)\big);u\big\} \ , 
\end{equation*}
with $f\{\cdot\,;u\}$ indicating the density function for a given $u$.
The noise distribution $f$ could be simple, but the forward mapping $\mathcal G$ is usually non-linear thus the potential function $\Phi(u)$ can be complicated and computationally expensive to evaluate.
For example, if we assume Gaussian noise $e \sim \mathcal{N}_m(0,\Sigma)$, for some symmetric, positive-definite 
$\Sigma\in \mathbb{R}^{m\times m}$, then the potential function $\Phi$ can be written as
\begin{equation}\label{eq:gauss_nz}
\Phi(u;y) = \tfrac{1}{2}\,\big|y-\mathcal{G}(u)\big|^{2}_{\Sigma}
\end{equation}
where we have considered the scaled inner product $\langle \cdot,\cdot \rangle_{\Sigma} = \langle \cdot, \Sigma^{-1}\cdot \rangle$.

In the Bayesian setting, a prior measure $\mu_0$ is imposed on $u$. 
In this paper we assume a Gaussian prior $\mu_0 = \mathcal N(0,\mathcal C)$ with the covariance $\mathcal C$ being a positive, self-adjoint and trace-class operator on $\Hi$.
Now we can get the posterior of $u$, denoted as $\mu^y$, using Bayes' theorem \citep{stuart10,dashti2017}: 
\begin{equation*}
%\label{eq:Bayes}
\frac{d\mu^y}{d\mu_0}(u) = \frac{1}{Z}\,\exp(-\Phi(u;y)) \ , \quad \textrm{if} \ 0< Z:=\int_{\Hi} \exp(-\Phi(u;y)) \mu_0(du) < +\infty \ .
\end{equation*}
Notice that the posterior $\mu^y$ can exhibit strongly non-Gaussian behavior,
with finite-dimensional projections having complex non-elliptic contours. %, although the existence of a density with respect to $\mu_0$ does imply near-Gaussianity for appropriate tail components of the target law $\mu^{y}$. 

%We review some of the advanced MCMC methods published in the literature, see e.g.\@ \cite{beskos08, beskos11, cotter13, beskos2017} or \cite{cui16} for recent contributions.
For simplicity we drop  $y$ from terms involved, so we denote the posterior as $\mu(du)$ and the potential function as $\Phi(u)$.
For the target $\mu(du)$ and many proposal kernels $Q(u,du')$ in the sequel, we define the bivariate 
law: %\textcolor{red}{for the transition kernel $\nu$}:
\begin{equation*}
\label{eq:MH1}
\nu(du,du') = \mu(du)\,Q(u,du')\ . 
\end{equation*}
Following the theory of Metropolis-Hastings on general spaces \citep{tierney98},
the acceptance probability 
$a(u,u')$ is non-trivial when 
\begin{equation*}
\label{eq:MH2}
\nu(du,du')\simeq\nu^{\top}(du,du'):=\nu(du',du)\  .
\end{equation*}
where $\simeq$ denotes mutual absolute continuity, that is, $\nu \ll \nu^{\top}$ and $\nu^{\top} \ll \nu$.
%\textcolor{red}{that is, both $\nu\ll\nu^{\top}$ and $\nu^{\top}\ll\nu$}.
The acceptance probability is:
\begin{equation}
\label{eq:MH3} 
a(u,u') = 1\wedge \frac{d\nu^{\top}}{d\nu}(u,u')\ . 
\end{equation}
where $\alpha \wedge \beta$ denotes the minimum of $\alpha, \beta\in \mathbb{R}$.
We first review $\infty$-GMC \citep{beskos2017}.

\subsection{Infinite-Dimensional Geometric MCMC ($\infty$-GMC)}
\label{sec:geo}

We start with the \emph{preconditioned Crank-Nicolson (pCN)} method \citep{neal10,beskos08,cotter13}, whose proposal does not use any data information. 
It modifies standard random-walk Metropolis (RWM) to make a proposal movement from the current position towards a random point,
with its size controlled by a free parameter $\rho\in [0,1)$:
\begin{equation*}
u'=\rho\,u+\sqrt{1-\rho^2}\,\xi\ , \quad \xi \sim \mathcal N(0, \mathcal C)
\end{equation*}
PCN is well-defined on the Hilbert space $\Hi$ with the proposal that preserves the prior when $\Phi\equiv 0$, whereas standard RWM 
can only be defined on finite-dimensional discretization and has diminishing 
acceptance probability for fixed step-size and increasing resolution \citep{roberts97}. 
Thus, pCN mixes faster than RWM in high-enough dimensions and the disparity in mixing rates becomes greater upon mesh-refinement \citep{cotter13}.

One approach for developing data-informed methods is to take advantage of gradient information in a steepest-descent setting.
Consider the Langevin SDE on $\Hi$, preconditioned by some operator $\Ko(u)$:
\begin{equation}\label{eq:Langevin}
\frac{du}{dt} = -\frac12\,\Ko(u)\,\big\{ \Co^{-1}u+ \nabla\Phi(u)\big\} + \sqrt{\Ko(u)}\, \frac{dW}{dt}
\end{equation}
with $\nabla\Phi(u)$ denoting the Fr\'echet derivative of $\Phi$ 
%(or the corresponding element of the relevant dual space; we will be more precise when defining our new methods in the section \ref{sec:geo}) 
and $W$ being the cylindrical Wiener process. 
If we let $\Ko(u)\equiv\Co$, the scales of these dynamics are tuned to the prior. 
In this setting, SDE \eqref{eq:Langevin} preserves the posterior $\mu$ and  
can be used as effective MCMC proposals \citep{beskos08,cotter13}.
\cite{beskos08} use a semi-implicit Euler scheme to discretize the above SDE
%\begin{equation}
%\label{eq:semi-implicit}
%\frac{u'-u}{h} = -\frac12\,\big\{\frac{u+u'}{2} +  \alpha \,\mathcal{C}D\Phi(u)\big\} + \sqrt{\frac{1}{h}}\,\xi\ ,\quad \xi\sim \mathcal N(0,\mathcal C)\ , 
%\end{equation}
and develop \emph{infinite-dimensional MALA ($\infty$-MALA)} with the following proposal
for an algorithmic parameter $\alpha\equiv 1$ and some small step-size $h>0$:
\begin{equation}
\label{eq:infMALA}
\begin{aligned}
u'&=\rho\,u + \sqrt{1-\rho^2}\,\tilde u\ , \quad \tilde u= \xi-\tfrac{\alpha\sqrt{h}}{2}\,\mathcal{C} \nabla\Phi(u) \ , \quad 
\rho = (1-\tfrac{h}{4})/(1+\tfrac{h}{4})\ . 
\end{aligned}
\end{equation}
Following \cite{beskos08}, under the assumption that $\Co \nabla \Phi(u)\in 
\mathrm{Im}(\Co^{\half})$, $\mu_0$-a.s.\@ in~$u$, one can use Theorem 2.21 of \cite{da14} on translations 
of Gaussian measures on separable Hilbert spaces, to obtain 
the Radon-Nikodym derivative in the acceptance probability \eqref{eq:MH3}.

To further incorporate local geometric information of the target distribution, one can consider 
a location-specific pre-conditioner $\Ko(u)$ as the covariance of a local Gaussian approximation $\mathcal{N}(m(u),\Ko(u))$ to the posterior, 
hence named Gaussian-approximate posterior (GAP) covariance,
defined through
\begin{equation}\label{eq:metric}
\Ko(u)^{-1} = \Co^{-1} + \beta \Ho(u) ,
\end{equation}
where $\Ho(u)$ can be chosen as Hessian, Gauss-Newton Hessian (GNH), or Fisher information operator.
With the Gaussian likelihood \eqref{eq:gauss_nz}, we note that they are connected as follows
\begin{align*}
\nabla^2\Phi(u) &= \langle \nabla\mathcal G(u),  \nabla\mathcal G(u)\rangle_\Sigma + \langle \nabla^2\mathcal G(u),  \mathcal G(u)-y \rangle_\Sigma\\
\underbrace{\langle \nabla\mathcal G(u),  \nabla\mathcal G(u)\rangle_\Sigma}_{GNH} &= \E_{y|u}[\underbrace{\nabla^2\Phi(u;y)]}_{Hessian} = \underbrace{\E_{y|u}[\nabla\Phi(u;y) \otimes \nabla\Phi(u;y)]}_{Fisher}
\end{align*}
In the following, unless stated otherwise, we will use GNH for $\Ho(u)$. 
Then $\Ko(u)^{-1}$ can be viewed as the Gauss-Newton Hessian approximation to the log-posterior.
In general $\Ko(u)^{-1}$ defines a Riemannian metric on the parameter space $\Hi$ which thus can be viewed as a Riemannian manifold \citep{girolami11}.
Notice that for $\beta\equiv 1$ the resulting dynamics do not, in general, preserve
the target $\mu$ as they omit the 
higher order (and computationally expensive) Christofell symbol terms, see e.g.\@
\cite{girolami11} and the discussion in 
\cite{xifara:14}.
However, the dynamics in \eqref{eq:Langevin}
can still capture an important part of the local curvature structure of the target and can provide an effective balance between mixing and computational cost \citep{girolami11}.
\cite{beskos2017} develop \emph{infinite-dimensional manifold MALA ($\infty$-mMALA; the name comes from the fact that the Langevin SDE \eqref{eq:Langevin} is defined on the manifold $\langle \Hi, \Ko(u)^{-1}\rangle$.)}
with the following proposal obtained by the similar semi-implicit scheme as %\eqref{eq:semi-implicit} 
in \cite{beskos08}
\begin{equation}\label{eq:infmMALA}
u' = \rho\,u + \sqrt{1-\rho^2} \,\tilde u\ , \quad \tilde u=\xi +  \tfrac{\sqrt{h}}{2} g(u)\ ,\quad  \xi\sim \mathcal N(0,\Ko(u))\ ,
\end{equation}
for $\rho$ defined as in (\ref{eq:infMALA}) and we have:
\begin{equation}\label{eq:ngrad}
g(u) = -\Ko(u) \big\{ \alpha \nabla\Phi(u) - \beta \Ho(u) u \big\}\  .
\end{equation}
%The resulting algorithm is named \emph{infinite-dimensional manifold MALA ($\infty$-mMALA)}.
With the assumptions 3.1-3.3 in \cite{beskos2017}, one can use the Feldman-Hajek theorem (see e.g.\@ Theorem 2.23 in \cite{da14})
to derive the acceptance probability \eqref{eq:MH3}. See more details in \cite{beskos2017}.
It is interesting to notice that
when $\rho=0$ ($h=4$), $\infty$-mMALA coincides with the \emph{stochastic Newton (SN)} MCMC method \citep{martin12},
with $\Ho(u):=\nabla^2\Phi(u)$.

One can generalize `MALA' type algorithms to multiple steps.
This is equivalent to investigating the following continuous-time Hamiltonian dynamics:
\begin{equation}
\label{eq:mHamiltonian0}
\frac{d^2u}{dt^2} + \Ko(u)\,
\big\{\, \mathcal{C}^{-1}u + \nabla\Phi(u) \big\} = 0, \quad \left. \frac{du}{dt}\right|_{t=0} \sim\mathcal{N}(0,\Ko(u))\ .
\end{equation}
HMC algorithm \citep{neal10} makes use of the Strang splitting scheme
%for $g$ as defined in \eqref{eq:ngrad}:
%\begin{align}
%\label{eq:msplit}
%du/dt = \tilde u \ , \quad 
%& d\tilde u/dt = -u \ ; \\
%\label{eq:msplit1}
%du/dt = 0 \ , \quad 
%& d\tilde u/dt = -g(u) \ . 
%\end{align}
to develop the following St\"ormer-Verlet symplectic integrator \citep{verlet67},
for $g$ as defined in \eqref{eq:ngrad}:
\begin{equation}\label{eq:mHDdiscret}
\begin{aligned}
\tilde u^- &= \tilde u_0 + \tfrac{\epsilon}{2}\,g(u_0)\ ; \\
\begin{bmatrix} u_\epsilon\\ \tilde u^{+}\end{bmatrix} &= \begin{bmatrix} \cos\epsilon & \sin\epsilon\\ -\sin\epsilon & \cos\epsilon
\end{bmatrix}  \begin{bmatrix} u_0\\ \tilde u^{-}\end{bmatrix}\  ;\\
\tilde u_\epsilon &= \tilde u^{+} + \tfrac{\epsilon}{2}\,g(u_\epsilon)\  .
\end{aligned}
\end{equation}
%
%where $\tau(t)/t \to 1$ as $t \to 0$.
Equation \eqref{eq:mHDdiscret} gives rise to the
leapfrog map $\Psi_\epsilon: (u_{0},\tilde u_{0})\mapsto (u_{\epsilon}, \tilde u_{\epsilon})$.
Given a time horizon $\tau$ and current position 
$u$, the MCMC mechanism proceeds 
by concatenating $I=\lfloor \tau/\epsilon \rfloor$ steps of leapfrog map consecutively:
\begin{equation*}
u' =\mathcal{P}_u\big\{{\Psi}_{\epsilon}^{I}(u,\tilde u)\big\}\ , \quad \tilde u\sim\mathcal{N}(0,\Ko(u))
\ . 
\end{equation*}
where $\mathcal{P}_u$ denotes the projection onto the $u$-argument.
For $\alpha\equiv 1$, this yields \emph{infinite-dimensional HMC ($\infty$-HMC)} \citep{beskos11} with $\beta\equiv 0$,
and \emph{infinite-dimensional manifold HMC ($\infty$-mHMC)} \citep{beskos2017} with $\beta\equiv 1$ respectively.
The well-posedness of these `HMC' type algorithms can be established under the same assumptions 3.1-3.3 of \cite{beskos2017}.
The following diagram illustrates graphically the connections between the various algorithms.
\begin{alignat*}{7}
&\boxed{\begin{aligned}&\text{\scalebox{1}{pCN}}\\[-.3\baselineskip]&\text{\scalebox{.6}{$\alpha=0$,\; $\beta=0$}}\\[-.2\baselineskip]\end{aligned}} \xrightarrow{\textrm{gradient}} &&
\boxed{\begin{aligned}&\text{\scalebox{1}{$\infty$-MALA}}\\[-.3\baselineskip]&\text{\scalebox{.6}{$\alpha=1$,\; $\beta=0$}}\\[-.2\baselineskip]\end{aligned}} &&\xrightarrow{\textrm{position-dependent preconditioner}\; \Ko(u)} 
\boxed{\begin{aligned}&\text{\scalebox{1}{$\infty$-mMALA}}\\[-.3\baselineskip]&\text{\scalebox{.6}{$\alpha=1$,\; $\beta=1$}}\\[-.2\baselineskip]\end{aligned}} &&\xrightarrow{h=4} 
\boxed{\begin{aligned}&\text{\scalebox{1}{SN}}\\[-.3\baselineskip]&\text{\scalebox{.6}{$\alpha=1$,\; $\beta=1$,\; $\rho=0$}}\\[-.2\baselineskip]\end{aligned}}\\
& &&\phantom{\qquad}\rotatebox[origin=c]{270}{$\xrightarrow{\textrm{multiple steps}\; (I>1)}$} &&\phantom{\xrightarrow{\textrm{position-dependent preconditioner}\; (\Ko(u))}\qquad} \rotatebox[origin=c]{270}{$\xrightarrow{\textrm{multiple steps}\; (I>1)}$} &&\\
& &&\boxed{\begin{aligned}&\text{\scalebox{1}{$\infty$-HMC}}\\[-.3\baselineskip]&\text{\scalebox{.6}{$\alpha=1$,\; $\beta=0$}}\\[-.2\baselineskip]\end{aligned}} &&\xrightarrow{\textrm{position-dependent preconditioner}\; \Ko(u)} 
\boxed{\begin{aligned}&\text{\scalebox{1}{$\infty$-mHMC}}\\[-.3\baselineskip]&\text{\scalebox{.6}{$\alpha=1$,\; $\beta=1$}}\\[-.2\baselineskip]\end{aligned}} &&\\
\end{alignat*}

\subsection{Dimension-Independent Likelihood Informed MCMC (DILI)}
Now we review the MCMC algorithm DILI \citep{cui16}, which is much relevant to our proposed methods.
The idea of DILI is to separate a low-dimensional LIS where likelihood-informed methods are applied to make inhomogeneous proposal to exploit the posterior structure that deviates from the prior structure; while the complementary space can be efficiently explored with simpler prior-based methods.

Inspired by the low-rank approximations to the Hessian of log-posterior in \cite{martin12,spantini2015}, DILI \citep{cui16} obtains the intrinsic low-dimensional LIS by comparing the Hessian of log-likelihood with prior covariance to identify directions in parameter space along which the posterior distribution differs most strongly from the prior.
DILI also uses the Langevin equation \eqref{eq:Langevin} preconditioned by an operator $\Ko_0:=\Cov_{\mu} [u]$ as the proposal kernel.
However, strictly speaking, the preconditioner $\Ko_0$ is not the same as the location-specific $\Ko(u)$, but rather globalized to aggregate the local geometry informed by data.
More specifically, DILI considers the following prior-preconditioned Gauss-Newton Hessian (ppGNH):
\begin{equation*}\label{eq:ppGNH}
\Co^{\frac12} \circ \Ho(u) \circ \Co^{\frac12}, \quad \Ho(u) := \langle \nabla\mathcal G(u),  \nabla\mathcal G(u)\rangle_\Sigma\  .
\end{equation*}
where $\Ho(u)=\nabla \mathcal{G}(u)^* \Sigma^{-1} \nabla \mathcal{G}(u)$ under the assumption \eqref{eq:gauss_nz}, coincides with Fisher metric.

ppGNH stems from the local Rayleigh ratio $R(\varphi; u) := \frac{\langle \varphi, \Ho(u) \varphi \rangle}{\langle \varphi, \Co^{-1} \varphi \rangle}, \quad \varphi \in \Imag(\Co^{\frac12})$
that quantifies how strongly the likelihood constrains variation in the $\varphi$ direction relative to the prior,
and can be converted to GNH w.r.t the whitened parameter $v:=\Co^{-\frac12} u$
\begin{equation*}\label{eq:ppGNH2}
\Ho(v) = \Co^{\frac12} \Ho(u) \Co^{\frac12}
\end{equation*}
Therefore by transforming $u\mapsto v$ and applying $\Co^{-\frac12}$ on both sides it helps to simplify \eqref{eq:Langevin} with $\Ko_0$ into
\begin{equation}\label{eq:Langevin3}
\frac{dv}{dt} = -\frac12\,\Ko\,\big\{ \Io v+ \nabla_v\Phi(v)\big\} +\, \sqrt{\Ko} \frac{dW}{dt}
\end{equation}
where $\nabla_v\Phi(v) = \Co^{\frac12} \nabla_u\Phi(u)$ and $\Ko:=\Co^{-\half} \Ko_0 \Co^{-\half}=\Cov_{\mu} [v]$.
%\begin{equation*}
%\Ko^{-1}:=\Co^{\frac12} \Ko_0^{-1} \Co^{\frac12} = \Io + \Ho, \quad \Ho = \Co^{\frac12} \E^{\mu} \Ho(u) \Co^{\frac12}
%\end{equation*}
Note the whitened variable $v$ has the prior $\mu_0^v = \mathcal N(0, \Io)$, where the identity covariance operator is not a trace-class on $\Hi$.
However, random draws from $\mu_0^v$ are square-integrable in the weighted space $\Imag(\Co^{-\frac12})$.
\eqref{eq:Langevin3} can still serve as a well-defined function space proposal for parameter $u$ after inverting the transformation.

The intrinsic low-dimensional subspace is obtained through a low-rank approximation of the globalized (expected) GNH $\Ho$. 
Suppose the operator $\Ho$ has eigen-pairs $\{\lambda_i, v_i(x)\}$ on $\Imag(\Co^{-\frac12})$. Then by thresholding $r$ largest eigenvalues one can define
\begin{subequations}\label{eqs:proj2r}
\begin{alignat}{2}
\Vo_r : \bbR^r \rightarrow \Hi, &\quad \{w_i\}_i^r \mapsto \sum_{i=1}^r w_i v_i(x) \\
\Vo_r^* : \Hi \rightarrow \bbR^r, &\quad v(x) \mapsto \{\langle v_i(x), v(x) \rangle\}_{i=1}^r
\end{alignat}
\end{subequations}

Note $\{v_i(x)\}_{i=1}^r$ provides the basis for the LIS and one can have the following decomposition for $v\in\Imag(\Co^{-\frac12})$:
\begin{equation*}\label{eq:decomp}
v = \Vo_r \Vo_r^* v + (\Io - \Vo_r \Vo_r^*) v
\end{equation*}
where $\Vo_r \Vo_r^* v$ is the projection of $v$ into LIS and $(\Io - \Vo_r \Vo_r^*) v$ lies in the complementary space dominated by the prior $\mu_0$;
and they are independent under the approximated posterior $\tilde\mu_y(dv) \propto f(\Vo_r \Vo_r^* v; y) \mu_0(dv)$.
Therefore one can approximate the posterior covariance (for the parameter $v$) as follows
\begin{equation}\label{eq:LISapx}
\Cov_{\mu} [v] \approx \aKo:= \Cov_{\tilde\mu} [v] = \Cov_{\mu} [\Vo_r \Vo_r^* v] + \Cov_{\mu_0} [(\Io - \Vo_r \Vo_r^*) v] 
= \Vo_r K_r \Vo_r^* + \Io - \Vo_r \Vo_r^* = \Psi_r (D_r-I_r) \Psi_r^* + \Io
\end{equation}
where $K_r:=\Cov_{\mu} [\Vo_r^* v]$ is computed empirically and has eigendecomposition $K_r = W_r D_r W_r^*$; $\Psi_r:=\Vo_r W_r$.
We can associate the complement of $\Imag(\Psi_r)$ in $\Imag(\Co^{-\half})$ with a set of eigenfunctions $\{\psi_i(x)\}_{i>r}$. 
Define $\Psi_\perp^*(v(x)):=\{\langle \psi_i(x), v(x) \rangle\}_{i>r}$.
By applying $\Psi_r^*$ and $\Psi_\perp^*$ respectively to \eqref{eq:Langevin3} with $\Ko$ replaced by $\aKo$ in \eqref{eq:LISapx} we obtain the splitting proposal as follows:
\begin{subequations}\label{eqs:DILI-core}
\begin{alignat*}{2}
dw_r &= -\frac12 D_r w_r dt - \frac{\gamma_r}{2} D_r \nabla_{w_r} \Phi(w; y) dt + \sqrt{D_r} dW_r \\
dw_\perp &= -\frac12 w_\perp dt - \frac{\gamma_\perp}{2} \nabla_{w_\perp} \Phi(w; y) dt + dW_\perp 
\end{alignat*}
\end{subequations}
where $w_r = \Psi_r^* v$ and $w_\perp  = \Psi_\perp^* v$; $\gamma_r$ and $\gamma_\perp$ are algorithmic parameters to indicate whether (set to 1) or not (set to 0) to include gradient information in the intrinsic subspace and its complement respectively.

Finally, we apply the semi-implicit Euler scheme as in \cite{beskos08} to the above SDE to get the discrete proposal in $w$ and rewrite it to the following proposal in $v$ (See more details in \cite{cui16} or in Section \ref{sec:DR-infmMALA})
\begin{equation}\label{eq:dili-prop}
\begin{aligned}
v' &= A v - G \nabla_v \Phi(v) +  B \xi \\
A &= \Psi_r (D_{A_r} - a_\perp I_r) \Psi_r^* + a_\perp \Io, \quad B = \Psi_r (D_{B_r} - b_\perp I_r) \Psi_r^* + b_\perp \Io, \quad G = \Psi_r D_{G_r} \Psi_r^*
\end{aligned}
\end{equation}
with the following parameters in the above equation:
\begin{equation}\label{eq:dili-pars}
\begin{aligned}
& D_{A_r} =(I_r-h_r D_r)+( 2I_r+h_r D_r)^{-1}h_r^2D_r^2(1-\gamma_r),\; && 
D_{B_r} =  \sqrt{( 2I_r+h_r D_r)^{-2}8h_rD_r(1-\gamma_r)+2h_rD_r\gamma_r},\\ 
& D_{G_r} = h_r D_r \gamma_r &&
 a_\perp = (2-h_\perp)/(2+h_\perp),\quad b_\perp =  \sqrt{8h_\perp}/(2+h_\perp) &&
\end{aligned}
\end{equation}

In the next section, we will derive similar intrinsic subspaces and splitting proposals directly from partial spectral decomposition.
Based on prior or GAP covariance operators, different dimension reduction strategies can be achieved, applicable to different scenarios depending on how informative the data are.

%%%%%%%%%%%%%%%%%%%%%%%%%%%%%%%%%%%%%%%%%%

\section{Dimension Reduction} \label{sec:DR}

We focus on dimension reduction through partial spectral decomposition.
Suppose we have eigen-pairs $\{\lambda_i, \phi_i(x)\}$ of some operator (prior covariance or GAP covariance) with $\lambda_1\geq \lambda_2\geq \cdots$.
The intrinsic low-dimensional subspace can be defined through principal eigen-functions, that is, $\Hi_r=\overline{\{\phi_i(x)\}_{i=1}^r}$. 
Let $\Hi=\Hi_r \oplus \Hi_\perp$.
Then we define the following generic projection operator $\mathcal P_r$, e.g. $\Vo_r \Vo_r^*$ in \eqref{eqs:proj2r}:
\begin{equation*}\label{eq:truncation}
\mathcal P_r: \Hi \to \Hi_r, \quad u\mapsto u^r := \sum_{i=1}^r \phi_i\langle \phi_i,u\rangle \ .
\end{equation*}
For example, if we truncate $\Ho(u)$ on the $r$-dimensional subspace $\Hi_r\subset \Hi$
\begin{equation*}
\Ho_r (u) (w,w') := \langle w, \mathcal P_r^* \Ho (u) \mathcal P_r w' \rangle = \langle \mathcal P_r w, \E_{Y|u}[\nabla_r \Phi(u)\tp{\nabla_r\Phi(u)}] \mathcal P_r w'\rangle, \quad \forall w, w'\in \Hi \ ,
\end{equation*}
where $\nabla_r:=\nabla_{u^r}$ is the restriction of $\nabla$ on $\Hi_r$,
then we can approximate $\Ko(u)$ by replacing $\Ho(u)$ with $\Ho_r(u)$ in \eqref{eq:metric}.
In this section we investigate two types of dimension reduction based on prior and likelihood respectively, with particular connection to DILI \citep{cui16}.

\subsection{Prior-Based Dimension Reduction}
Let $\{\lambda_i\}_{i\ge 1}$, $\{u_i(x)\}_{i\ge 1}$ be the eigenvalues and eigenfunctions of the prior covariance  
operator $\Co$ such that $\Co u_i(x) = \lambda_i u_i(x)$, $i\ge 1$.
Assume $\{\lambda_i\}_{i\ge 1}$ is a sequence of positive reals with $\sum{\lambda_i}<\infty$ (this enforces 
the trace-class condition for $\Co$), and $\{u_i(x)\}_{i\ge 1}$ is an orthonormal basis of $\Hi$.
We make the usual correspondence  
between an element $u^*\in\Hi$ and its coordinates w.r.t.\@ the basis 
$\{u_i(x)\}_{i\ge 1}$, that is
$u^*=\sum_i u^*_i u_i(x) \leftrightarrow \{u^*_i\}_{i\ge 1}$.
Using the Karhunen-Lo\`eve expansion 
of a Gaussian measure \citep{adler10,bogachev98,dashti2017} we have the representation:
\begin{equation}
\label{eq:KL}
u^* \sim \mathcal{N}(0,\Co)\,\,\, \Longleftrightarrow\,\,\,
u^*=\sum_{i=1}^{\infty} u^*_i u_i(x)\ ,\,\, 
u^*_i\sim \mathcal{N}(0,\lambda_i)\ , \,\,\forall \, i\in \mathbb N\ .
\end{equation}
Define the Sobolev spaces 
corresponding to the basis $\{u_i(x)\}$:
\begin{equation*}
\Hi^s = \big\{\{u^*_i\}_{i\ge 1}: \sum i^{2s}|u^*_i|^{2}<\infty  \big\}\ , \quad s\in \mathbb{R}\ ,
\end{equation*}
so that  $\Hi^0\equiv \Hi$ and $\Hi^{s}\subset \Hi^{s'}$ if $s'<s$.
Typically, we will have the following decay assumption:
\begin{asump}\label{asump:eig_decay}
$\sqrt{\lambda_i} = \Theta(i^{-\kappa})$ for some $\kappa>1/2$ 
in the sense that there exist constants $C_1, C_2>0$ such that $C_1\cdot i^{-\kappa}\leq 
\sqrt{\lambda_i}\leq C_2 \cdot i^{-\kappa}$ for all $i \ge 1$.
\end{asump}
Thus, the prior (so also the posterior) concentrate on $\Hi^s$ for any $s<\kappa-1/2$.
Notice also that:
$\mathrm{Im}(\Co^{1/2})= 
\Hi^{\kappa}\ .$
By thresholding $r$ largest eigenvalues $\lambda_1\geq \lambda_2\geq \cdots \lambda_r$, 
eigen-basis $\{u_i(x)\}_{i=1}^r$ can also serve as a basis for low-dimensional subspace.
One can define the projection $\mathcal P_r$ based on the eigen-pairs $\{\lambda_i, \phi_i(x)\}_{i=1}^r$.
Such prior based dimension reduction can work well in a class of inverse problems, especially when they are prior dominated \citep[See][for more details]{beskos2017}.

With the eigen-pairs $\{\lambda_i, u_i(x)\}$, the prior covariance operator $\Co$ can be written and approximated as
\begin{equation*}
\Co = \Uo \Lambda \Uo^* \approx \aCo := \Uo_r \Lambda_r \Uo_r^*
\end{equation*}
with $\Uo_r$ and $\Uo_r^*$ defined by $\{u_i(x)\}_{i=1}^r$ similarly as \eqref{eqs:proj2r}.
Then we can approximate $\Ko(u)$
\begin{equation*}
\Ko(u) = (\Co^{-1} + \Ho(u) )^{-1} \approx \Co^{\half} ( \Io + \aCo^{\half} \Ho(u) \aCo^{\half} )^{-1} \Co^{\half}
= \Co^{\half} ( \Io + \Uo_r \hat H_r(u) \Uo_r^* )^{-1} \Co^{\half}
\end{equation*}
where $\hat H_r(u):= \Lambda_r^{\half} H_r(u) \Lambda_r^{\half}$, and $H_r(u):= \Uo_r^* \Ho(u) \Uo_r$.
By the Sherman-Morrison-Woodbury formula,
\begin{equation*}
\begin{split}
\Ko(u) &\approx %\Co^{\half} ( \Io + \Uo_r \hat H_r(u) \Uo_r^* )^{-1} \Co^{\half} = 
\Co^{\half} [ \Io - \Uo_r (\hat H_r(u)^{-1} + I_r)^{-1} \Uo_r^* ] \Co^{\half} %\\
= \Co^{\half} [ \Io + \Uo_r ( (\hat H_r(u) + I_r)^{-1} - I_r) \Uo_r^* ] \Co^{\half} \\
&\approx \Co + \aCo^{\half} \Uo_r ( D_r - I_r) \Uo_r^* \aCo^{\half}
%\approx \Co +  \Uo_r \Lambda_r^{\half} ( (\hat H_r(u) + I_r)^{-1} - I_r) \Lambda_r^{\half} \Uo_r^* \\
= \aKo(u) := \Co +  \Uo_r \Lambda_r^{\half} (D_r - I_r) \Lambda_r^{\half} \Uo_r^* 
%&= \Co +  \Uo_r ( (H_r(u)+ \Lambda_r^{-1})^{-1} - \Lambda_r) \Uo_r^*
\end{split}
\end{equation*}
where $D_r := (\hat H_r(u)+ I_r)^{-1}$.
This implies $\aKo(u)^\half = \Co^\half +  \Uo_r \Lambda_r^{\half} (\sqrt{D_r} - I_r) \Uo_r^*$.
By applying $\Uo_r^*$ and $\Uo_\perp^*$ to \eqref{eq:Langevin} respectively and using the above approximations $\aKo(u),\ \aKo(u)^\half$ we get
\begin{subequations}
\begin{alignat*}{2}
du_r &= -\half D_r u_r dt - \frac{\gamma_r}{2} D_r \nabla_{u_r} \Phi(u; y) dt + \sqrt{D_r} dW_r \\
du_\perp &= -\half u_\perp dt - \frac{\gamma_\perp}{2} \nabla_{u_\perp} \Phi(u; y) dt + dW_\perp 
\end{alignat*}
\end{subequations}
where $u_r = \Lambda_r^{-\half} \Uo_r^* u$ and $u_\perp  = \Lambda_\perp^{-\half} \Uo_\perp^* u$.

\begin{rk}
The eigen-pairs $\{\lambda_i, u_i(x)\}$ can be pre-computed/pre-specified. For each step we only need to calculate $r$-dimensional matrix $\hat H_r(u)$ at $u$, which can be efficiently obtained by Hessian action through adjoint methods in PDE.
\end{rk}

\subsection{Likelihood-Based Dimension Reduction}
When the data are more informative, the above prior based dimension reduction may not perform well and thus we need reduction techniques that are likelihood based, like DILI \citep{cui16} or AS \citep{constantine2016}.
We could consider the following generalized eigen-problem $(\Ho(u), \Co^{-1})$, to find the eigen-pairs $\{\lambda_i, u_i(x)\}$ such that
\begin{equation}\label{eq:geigu}
\Ho(u) u_i(x) =  \lambda_i \Co^{-1} u_i(x)
\end{equation}
which can be shown equivalent to the eigen-problem of ppGNH $\Ho(v)$ for eigen-pairs $\{\lambda_i, v_i(x)\}$
\begin{equation}\label{eq:eigv}
\Ho(v) v_i(x) = \Co^{\half} \Ho(u) \Co^{\half} v_i(x) =  \lambda_i v_i(x)
\end{equation}
with $v_i(x) = \Co^{-\half}u_i(x)$; it can be written as $\Ho(v) = \Vo(v) \Lambda(v) \Vo(v)^*$ where
\begin{subequations}
\begin{alignat*}{2}
\Vo : \ell^2 \rightarrow \Hi, &\quad \{\lambda_i\} \mapsto \sum_{i=1}^\infty \lambda_i v_i(x) \\
\Vo^* : \Hi \rightarrow \ell^2, &\quad v(x) \mapsto \{\langle v_i(x), v(x) \rangle\}_{i=1}^\infty \\
\Lambda : \ell^2 \rightarrow \ell^2, &\quad \{a_i\} \mapsto \{\lambda_i a_i\} 
\end{alignat*}
\end{subequations}
For the convenience of exposition and comparison with DILI, we work with the whitened coordinates $v(x):=\Co^\half u(x)$ in the following. 
To simplify the notation, we also drop some dependence of $v$ where there is no ambiguity, but readers should be reminded that $\Vo$, $\Vo^*$ and $\Lambda$ are defined and approximated point wise $v\in \Hi$.

In the whitened coordinates $v(x)$, we can rewrite $\Ko(v)^{-1}$, the GNH of the log-posterior, as follows
\begin{equation*}\label{eq:postprec}
\Ko(v)^{-1}:=\Co^{\half} \Ko(u)^{-1} \Co^{\half} = \Io + \Ho(v) = \Io + \Vo \Lambda \Vo^* %=  \Vo ( \Io + \Lambda ) \Vo^*
\end{equation*}
which has a direct $r$-dimensional low-rank approximation
\begin{equation}\label{eq:lr-apx}
\Ko(v)^{-1} \approx \Io + \aHo(v) = \Io + \Vo_r \Lambda_r \Vo^*_r %\approx \Vo_r (I_r + \Lambda_r) \Vo^*_r
\end{equation}
Note, in the sense of using the low-rank approximation to the Hessian of log-posterior, this approach is more faithful to \cite{martin12,spantini2015} than DILI.
%where the first approximation is more accurate than the second.
Thus $\Ko(v)$ can be approximated using the Sherman-Morrison-Woodbury formula
\begin{equation}\label{eq:apx_postcov}
\Ko(v) \approx \aKo(v) := ( \Io + \aHo(v) )^{-1}
 = ( \Io + \Vo_r \Lambda_r \Vo^*_r )^{-1} = \Io - \Vo_r (\Lambda_r^{-1} + I_r)^{-1} \Vo_r^*
 = \Io + \Vo_r (D_r -I_r) \Vo_r^*
\end{equation}
where $D_r := (I_r + \Lambda_r)^{-1}$.
%which can have two approximations correspondingly
%\begin{subequations}\label{eqs:2apx_postcov}
%\begin{alignat}{3}
%\Ko(v) &\approx& [ \Io + \Vo_r \Lambda_r \Vo^*_r ]^{-1} &=& \Io - \Vo_r (\Lambda_r^{-1} + I_r)^{-1} \Vo_r^* \label{eq:2apx_postcov-1} \\
%\Ko(v) &\approx& [ \Vo_r (I_r + \Lambda_r) \Vo^*_r ]^{-1} &\approx& \Vo_r (I_r + \Lambda_r)^{-1} \Vo_r^* \label{eq:2apx_postcov-2}
%\end{alignat}
%\end{subequations}

\begin{rk}
With the approximation \eqref{eq:apx_postcov} applied to the following $\Ko$, we directly have
\begin{equation}\label{eq:lo-postcov}
K_r =\Cov_{\mu} [\Vo_r^* v] = \Vo_r^* \Cov_{\mu} [v] \Vo_r = \Vo_r^* \Ko \Vo_r 
\approx I_r - (\Lambda_r^{-1} + I_r)^{-1} %= (I_r + \Lambda_r)^{-1}
= D_r
\end{equation}
Therefore we can rewrite the approximation \eqref{eq:apx_postcov} analogous to the approximation  \eqref{eq:LISapx} in DILI \citep{cui16}
\begin{equation*}
\Ko(v) \approx \aKo(v) = \Io + \Vo_r (D_r -I_r) \Vo_r^* = \Vo_r K_r \Vo_r^* + \Io - \Vo_r \Vo_r^*
\end{equation*}
In this sense, the approximation \eqref{eq:apx_postcov} is consistent with the low-rank approximation by DILI.
However, we have avoided the empirical computation of $K_r =\Cov_{\mu} [\Vo_r^* v]$.
And since $K_r$ is already in the diagonal form \eqref{eq:lo-postcov}, we have also avoided the rotation $\Psi_r=\Vo_r W_r$ in DILI \citep{cui16}.
\end{rk}

By applying $\Vo_r^*$ and $\Vo_\perp^*$ to \eqref{eq:Langevin3} respectively and using the approximation \eqref{eq:apx_postcov}, we have the following proposal split on the low-dimensional subspace $\Hi_r$ and its complement $\Hi_\perp$:
\begin{subequations}\label{eqs:split-prop}
\begin{alignat}{2}
dv_r &= -\half D_r v_r dt - \frac{\gamma_r}{2} D_r \nabla_{v_r} \Phi(v; y) dt + \sqrt{D_r} dW_r \\
dv_\perp &= -\half v_\perp dt - \frac{\gamma_\perp}{2} \nabla_{v_\perp} \Phi(v; y) dt + dW_\perp 
\end{alignat}
\end{subequations}
where $v_r = \Vo_r^* v$ and $v_\perp  = \Vo_\perp^* v$.

%%%%%%%%%%%%%%%%%%%%%%%%%%%%%%%%%%%%%%%%%%

\section{Dimension Reduced Algorithms} \label{sec:DR-alg}

In this section we apply the dimension reduction techniques discussed in Section \ref{sec:DR} to two $\infty$-GMC algorithms, $\infty$-mMALA and $\infty$-mHMC.
Since the prior-based dimension reduction has been implemented in \cite{beskos2017}, we focus on the likelihood-based dimension reduction.
We derive new efficient algorithms and compare them to DILI.

\subsection{Dimension-Reduced $\infty$-mMALA}\label{sec:DR-infmMALA}

We can apply the similar semi-implicit Euler scheme as in \cite{beskos08} %\eqref{eq:semi-implicit} 
to \eqref{eqs:split-prop}, 
or equivalently use the approximation \eqref{eq:apx_postcov} in the following whitened proposal, which is a reformulation of manifold MALA proposal \eqref{eq:infmMALA} by the transformation $v(x)=\Co^{-\half} u(x)$
\begin{equation}\label{eq:infmMALA-whiten}
v' = \rho\,v + \sqrt{1-\rho^2} \,\tilde v\ , \quad \tilde v=\tilde \xi +  \tfrac{\sqrt{h}}{2} g(v)\ ,\quad  \tilde \xi \sim \mathcal N(0,\Ko(v))\ ,
\end{equation}
for $\rho$ defined as in \eqref{eq:infMALA} and setting $\beta=1$, and replacing $\alpha$ with $s(\gamma)$ in \eqref{eq:ngrad} we have:
\begin{equation}\label{eq:ngrad-whiten}
g(v) = \Co^{-\half} g(u) = -\Ko(v) \big\{ - \Ho(v)v + s(\gamma) \nabla_v\Phi(v) \big\}\  %, \quad \Ho(v) = \Ko(v)^{-1} - \Io
= (\Io - \Ko(v)) v - \Ko(v) s(\gamma) \nabla_v\Phi(v)
\end{equation}
where $s(\gamma)=\Vo_r\gamma_r\Vo_r^* + \Vo_\perp\gamma_\perp\Vo_\perp^*$, $\gamma=\begin{bmatrix} \gamma_r \\ \gamma_\perp \end{bmatrix}$ with $\gamma_r$ and $\gamma_\perp$ chosen to be $0$ or $1$ to indicate whether or not to include gradient information on the low-dimensional subspace $\Hi_r$ and its complement $\Hi_\perp$ respectively. 
This proposal \eqref{eq:infmMALA-whiten} can be reformulated as follows
\begin{equation}\label{eq:reform-infmMALA}
%\begin{split}
v' %&= (\rho_0 \Io + \rho_1 \Ko(v)\Ho(v) ) v - \rho_1 \Ko(v) \nabla_v\Phi(v) + \rho_2 \sqrt{\Ko(v)} \xi \\
= \rho_0 v + \rho_1 g(v) + \rho_2 \tilde\xi = (\Io -\rho_1 \Ko(v)) v - \rho_1 \Ko(v) \nabla_v\Phi(v) + \rho_2 \sqrt{\Ko(v)} \xi, \quad \xi \sim \mathcal N(0, \Io)
%\end{split}
\end{equation}
where $\rho_0=\rho$, $\rho_1=1-\rho$, and $\rho_2=\sqrt{1-\rho^2}$.

With the approximation \eqref{eq:apx_postcov}, it is straightforward to verify
\begin{equation}\label{eq:apx_postcov2}
%\Ko(v) \approx \aKo(v) = \Io + \Vo_r ( D_r -I_r ) \Vo_r^*, \quad 
\sqrt{\Ko(v)} \approx \aKo(v)^{\half} = \Io + \Vo_r ( \sqrt{D_r} -I_r ) \Vo_r^*
\end{equation}
Substituting the approximations \eqref{eq:apx_postcov} \eqref{eq:apx_postcov2} into \eqref{eq:reform-infmMALA}
yields the following proposal
\begin{equation}\label{eq:split-prop2}
v' = (\rho_0\Io - \rho_1 \Vo_r (D_r - I_r) \Vo_r^*) v - \rho_1 (\Io + \Vo_r (D_r - I_r) \Vo_r^*) s(\gamma) \nabla_v\Phi(v) 
+ \rho_2 (\Io + \Vo_r ( \sqrt{D_r} -I_r ) \Vo_r^*) \xi
\end{equation}
With $\Io=\Vo_r\Vo_r^*+\Vo_\perp\Vo_\perp^*$ and $\Vo=[\Vo_r,\; \Vo_\perp]$,
the proposal \eqref{eq:split-prop2} can be rewritten as
\begin{equation*}
v'  = \Vo \begin{bmatrix} I_r - \rho_1 D_r & 0 \\
0 & \rho_0 \Io_\perp \end{bmatrix} \Vo^* v
- \Vo \begin{bmatrix} \rho_1 D_r & 0 \\
0 & \rho_1 \Io_\perp \end{bmatrix} %\Vo^* 
\gamma %\begin{bmatrix} \gamma_r \\ \gamma_\perp \end{bmatrix} 
\Vo^* \nabla_v\Phi(v)
+ \Vo \begin{bmatrix} \rho_2 \sqrt{D_r}  & 0 \\
0 & \rho_2 \Io_\perp \end{bmatrix} \Vo^* \xi
\end{equation*}

\begin{rk}
If we choose $\gamma_\perp=0$ as DILI \citep{cui16}, then the proposal \eqref{eq:split-prop2} can further be simplified as
\begin{equation}\label{eq:split-prop3}
v' = (\rho_0\Io - \rho_1 \Vo_r (D_r - I_r) \Vo_r^*) v - \rho_1 \Vo_r D_r \gamma_r \Vo_r^* \nabla_v\Phi(v) 
+ \rho_2 (\Io + \Vo_r ( \sqrt{D_r} -I_r ) \Vo_r^*) \xi
\end{equation}
Comparing with the following operator-weighted proposal as in DILI \eqref{eq:dili-prop} (also Equations (17) (36) of \cite{cui16}; note that $\Psi_r=\Vo_r I_r$ when $K_r$ is already in the diagonal form.)
\begin{equation*}
\begin{aligned}
v' &= A v - G \nabla_v \Phi(v) +  B \xi \\
A &= \Vo_r (D_{A_r} - a_\perp I_r) \Vo_r^* + a_\perp \Io, \quad B = \Vo_r (D_{B_r} - b_\perp I_r) \Vo_r^* + b_\perp \Io, \quad G = \Vo_r D_{G_r} \Vo_r^*
\end{aligned}
\end{equation*}
we found that the proposal \eqref{eq:split-prop3} corresponds to the proposal 3.2 (LI-Langevin) of \cite{cui16} (Equation \eqref{eq:dili-pars} with $\gamma_r=1$) with the following parameters in the above equation:
\begin{equation}\label{eq:connection2dili}
\begin{aligned}
& D_{A_r} = I_r-\rho_1 D_r,\; && D_{B_r} = \rho_2 \sqrt{D_r},\; && D_{G_r} = \rho_1 D_r \gamma_r , \\
& a_\perp = \rho_0,\; && b_\perp =  \rho_2 &&
\end{aligned}
\end{equation}
Similar connection to the proposal 3.1 (LI-Prior) of \cite{cui16} (Equation \eqref{eq:dili-pars} with $\gamma_r=0$) can be drawn with $\alpha=\beta=0$, that is, $g(v)\equiv 0$.

The proposal \eqref{eq:split-prop3} is analogous to but different from DILI proposal in two aspects:
(i)\ while $A$, $B$, $G$ in DILI are fixed once the global LIS is obtained, $\Vo_r(v)$, $D_r(v)$ and $\Vo(v)^*$ implicitly depend on location $v\in\Hi$ thus our proposal \eqref{eq:split-prop3} is more general with the freedom of choosing between a position-specific and an adaptively globalized implementation, as detailed in the following;
(ii) inherited from the semi-implicit scheme of $\infty$-mMALA, $\rho_0(h),\, \rho_1(h),\, \rho_2(h)$ are all functions of the step size $h$, therefore there is only one tuning parameter $h$ for the discretized step size in our proposal \eqref{eq:split-prop3}, while DILI has separate step sizes for $\Hi_r$ and $\Hi_\perp$ respectively, which may increase the difficulty of tuning.
\end{rk}
\begin{rk}
For all proposals in DILI, $\gamma_\perp=0$, which corresponds to a `pCN' (gradient-free) type update in $\Hi_\perp$.
However, for computational practice, once the gradient $\nabla_v\Phi(v)$ has been obtained using adjoint method in PDE, we can still take advantage of it for a `manifold-MALA' (with projected gradient) type update in $\Hi_\perp$.
That is, to keep the gradient term ($\gamma_\perp=1$) in the proposal \eqref{eq:split-prop2}, where we did not actually compute $\Vo_\perp\Vo_\perp^*$.
\end{rk}

With the proposal \eqref{eq:split-prop2}, we denote the position-specific implementation as \emph{Dimension-Reduced $\infty$-dimensional manifold MALA (DR-$\infty$-mMALA)}.
To prove the well-posedness of the resulting MCMC algorithm and derive its acceptance probability, we need the following assumption on the forward mapping $\mathcal G$:
\begin{asump}\label{asump:fwd_diff}
For some $\ell \in [0,\kappa -1/2)$, the mappings $\{\mathcal G_k : \Hi^\ell \mapsto \mathbb R, \ 1 \leq k \leq m\}$ are Fr\'echet differentiable on $\Hi^\ell$ with derivatives $\nabla\mathcal G_k\in\Hi^{-\ell}$.
\end{asump}

The following theorem establishes the validity of DR-$\infty$-mMALA.
\begin{thm}\label{thm:DR-infmMALA}
Under the assumptions \ref{asump:eig_decay} and \ref{asump:fwd_diff}, DR-$\infty$-mMALA with proposal \eqref{eq:split-prop2} is well-defined on Hilbert space $\Hi$, with the acceptance probability specified as follows
\begin{equation}\label{eq:acpt_DRinfmMALA}
\begin{aligned}
a(v,v')
=& 1 \wedge \frac{\kappa(v',v)}{\kappa(v,v')}, \qquad 
\kappa(v,v') = \exp\left\{-\Phi(v)\right\}  \times  \lambda(w^*;v) \\
\lambda(w^*;v) 
%=& \frac{d\mathcal N( \tfrac{\sqrt{h}}{2}\,\hat g(v),\aKo(v))}{d\mathcal N(0,\Io)}
%(w) = \frac{d\mathcal N(\tfrac{\sqrt{h}}{2}\,\hat g(v),\aKo(v))}{d\mathcal N(0,\aKo(v))}(w)\times \frac{d\mathcal N(0,\aKo(v))}{d\mathcal N(0,\Io)}(w)\\
=& 
\exp\big\{-
\tfrac{h}{8}|\aKo^{-\frac12}(v)\hat g(v)|^2+
\tfrac{\sqrt{h}}{2}\langle \aKo(v)^{-\half}g(v),\aKo(v)^{-\half}w^*\rangle\big\} \\
& \qquad \qquad \times 
 \exp\big\{-\tfrac12\langle w^*, \aHo(v) w^* \rangle \big\}\cdot 
|\,\aKo(v)^{-1/2}\,|\ .
\end{aligned}
\end{equation}
where $w^*=\Co^{-\half} w=\frac{v'-\rho_0 v}{\rho_2}$ and $\hat g(v):=-\aKo(v) \big\{ - \aHo(v)v + s(\gamma) \nabla_v\Phi(v) \big\}$.
\end{thm}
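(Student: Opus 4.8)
The plan is to prove well-posedness by showing that the proposal kernel and its reverse are mutually absolutely continuous with respect to an appropriate reference Gaussian measure on $\Hi$, and then to compute the Radon-Nikodym derivative in \eqref{eq:MH3} explicitly to recover the stated acceptance ratio. The key structural observation is that the proposal \eqref{eq:split-prop2} is precisely the whitened manifold-MALA proposal \eqref{eq:infmMALA-whiten} with the exact operators $\Ko(v),\sqrt{\Ko(v)}$ replaced by their low-rank approximants $\aKo(v),\aKo(v)^{\half}$ from \eqref{eq:apx_postcov},\eqref{eq:apx_postcov2}. Since $\aKo(v)=\Io+\Vo_r(D_r-I_r)\Vo_r^*$ differs from the identity only on the $r$-dimensional subspace $\Hi_r$, and $\aKo(v)^{\half}$ likewise, I would first verify that this finite-rank perturbation of the identity keeps the proposal within the framework of the Feldman-Hájek theorem, exactly as in the derivation of the $\infty$-mMALA acceptance probability in \cite{beskos2017}.

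\textbf{Main steps.} First I would establish that the drift term $\hat g(v)$ lands in the Cameron-Martin-type space required for absolute continuity. Writing $\hat g(v)=(\Io-\aKo(v))v-\aKo(v)s(\gamma)\nabla_v\Phi(v)$, the first summand $(\Io-\aKo(v))v=-\Vo_r(D_r-I_r)\Vo_r^* v$ is manifestly finite-rank, hence harmless; the gradient summand is controlled by Assumption \ref{asump:fwd_diff}, which guarantees $\nabla_v\Phi(v)=\Co^{\half}\nabla_u\Phi(u)\in\Hi^{\ell}$ for some $\ell\in[0,\kappa-1/2)$, together with the eigen-decay of Assumption \ref{asump:eig_decay} which places the noise draw $\xi\sim\mathcal N(0,\Io)$ in the weighted space $\Imag(\Co^{-\half})$. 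Second, I would decompose the proposal into its $\Hi_r$ and $\Hi_\perp$ components, as in the block form following \eqref{eq:split-prop2}: on $\Hi_\perp$ the proposal reduces to the dimension-independent pCN/MALA update whose well-posedness is already known, while on $\Hi_r$ it is a genuinely finite-dimensional Gaussian proposal with nondegenerate covariance $\rho_2^2 D_r$, for which the Radon-Nikodym derivative is elementary. Third, I would assemble the two pieces: the cross term vanishes because $\aKo(v)$ respects the $\Hi_r\oplus\Hi_\perp$ splitting, so the full RN derivative factorizes, and the $\Hi_\perp$ contribution cancels in the ratio $\kappa(v',v)/\kappa(v,v')$ while the $\Hi_r$ contribution produces the three factors in $\lambda(w^*;v)$: the quadratic-in-$\hat g$ term $-\tfrac{h}{8}|\aKo^{-\frac12}(v)\hat g(v)|^2$, the linear cross term $\tfrac{\sqrt h}{2}\langle\aKo(v)^{-\half}g(v),\aKo(v)^{-\half}w^*\rangle$, and the Gaussian density mismatch $\exp\{-\tfrac12\langle w^*,\aHo(v)w^*\rangle\}\cdot|\aKo(v)^{-1/2}|$, the last arising from the change of the approximate Gaussian reference measure $\mathcal N(0,\aKo(v))$ relative to the prior $\mathcal N(0,\Io)$ via $\aKo(v)^{-1}=\Io+\aHo(v)$.

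\textbf{Main obstacle.} The delicate point is that the preconditioner $\aKo(v)$ depends on the current position $v$, so the forward and reverse proposals use \emph{different} Gaussian covariances, $\mathcal N(0,\aKo(v))$ versus $\mathcal N(0,\aKo(v'))$. Consequently the determinant-like factor $|\aKo(v)^{-1/2}|$ does not cancel in the acceptance ratio and must be tracked carefully; making sense of this factor on the infinite-dimensional $\Hi$ requires that $\aKo(v)$ be a trace-class perturbation of the identity, which holds here only because the perturbation is finite-rank (rank $r$), so the infinite product reduces to the genuinely finite determinant $\det(D_r^{1/2})$. I would therefore emphasize that the low-rank structure of \eqref{eq:apx_postcov} is exactly what makes the Feldman-Hájek machinery applicable and the determinant well-defined, mirroring the position-dependent metric argument in \cite{beskos2017} but with the simplification that only the $r$ principal directions carry nontrivial geometry. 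The remaining verification that $\langle w^*,\aHo(v)w^*\rangle$ and the inner-product terms are a.s.\ finite follows directly from $w^*=(v'-\rho_0 v)/\rho_2$ and the square-integrability of the noise in $\Imag(\Co^{-\half})$.
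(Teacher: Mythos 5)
Your overall route coincides with the paper's own proof in \ref{apx:thm-DR-infmMALA}: there, too, one verifies that the finite-rank perturbation $\aKo(v)^{\half}(\aKo(v)^{\half})^{*}-\Io=\Vo_r(D_r-I_r)\Vo_r^{*}$ is (trivially) Hilbert--Schmidt, that $\hat g(v)\in\Hi$, and then applies the Feldman--Hajek theorem to $\mathcal N\left(\tfrac{\sqrt h}{2}\hat g(v)\right)\simeq\mathcal N(0,\Io)$ against the reference law $\nu_0$, reading off the acceptance probability from the $\infty$-mMALA framework of \cite{beskos2017} with $\Ko(u)\mapsto\aKo(v)$ and $g(u)\mapsto\hat g(v)$. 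Your identification of the finite-rank structure as the reason the Feldman--Hajek machinery and the determinant are well-defined is exactly the crux, and your ``main obstacle'' paragraph correctly flags the non-cancellation of the determinant due to the position-dependent preconditioner.

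There is, however, a genuine flaw in your assembly step: the claim that ``the $\Hi_\perp$ contribution cancels in the ratio $\kappa(v',v)/\kappa(v,v')$'' is true only when $\gamma_\perp=0$. The theorem is stated for general $s(\gamma)$, and when $\gamma_\perp=1$ the drift $\hat g(v)=(\Io-\aKo(v))v-\aKo(v)s(\gamma)\nabla_v\Phi(v)$ has the nonzero complement component $-\Vo_\perp\Vo_\perp^{*}\nabla_v\Phi(v)$ (since $\aKo(v)$ acts as the identity on $\Hi_\perp$), so the terms $-\tfrac h8|\aKo^{-\half}(v)\hat g(v)|^2$ and $\tfrac{\sqrt h}{2}\langle\aKo(v)^{-\half}\hat g(v),\aKo(v)^{-\half}w^{*}\rangle$ in $\lambda(w^{*};v)$ carry $\Hi_\perp$ pieces evaluated at $v$ and at $v'$ that do \emph{not} cancel --- exactly as the $\infty$-MALA correction relative to pCN does not cancel. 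Followed literally, your plan reproduces \eqref{eq:acpt_DRinfmMALA} only for $\gamma_\perp=0$ and yields an incorrect acceptance probability (breaking reversibility with respect to $\mu$) for $\gamma_\perp=1$. Relatedly, since $\Vo_r=\Vo_r(v)$ is position-specific, the splittings at $v$ and at $v'$ differ, so any factorization must be done kernel-by-kernel against the fixed reference $\nu_0(dv,dv')=\mu_0(dv)Q_0(v,dv')$, never directly between the forward and reverse kernels; this is precisely why the paper works with the full operator $\aKo(v)$ on $\Hi$ and quotes the position-dependent framework of \cite{beskos2017} wholesale rather than splitting. Two smaller slips: Assumption \ref{asump:fwd_diff} gives $\nabla_v\Phi(v)=\Co^{\half}\nabla_u\Phi(u)\in\Hi^{\kappa-\ell}$, not $\Hi^{\ell}$; and $|\aKo(v)^{-1/2}|=|I_r+\Lambda_r|^{1/2}=|D_r|^{-1/2}$, not $\det(D_r^{1/2})$ --- consistent with the $-\tfrac12\log|D_r|$ term appearing in the proof of Corollary \ref{cor:acpt_compare}.
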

\proof{See \ref{apx:thm-DR-infmMALA}.}
\begin{cor}\label{cor:acpt_compare}
With the setting \eqref{eq:connection2dili}, 
the acceptance probability \eqref{eq:acpt_DRinfmMALA} differs from that of DILI \citep[Equation (40) of][]{cui16} by a determinant term $\frac{|D_r(v)|^\half}{|D_r(v')|^\half}$, which is used to adjust the change of local geometry.
\end{cor}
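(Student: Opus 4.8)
The plan is to isolate the single position-dependent determinant factor hidden in the acceptance probability \eqref{eq:acpt_DRinfmMALA} and then argue that, under the parameter correspondence \eqref{eq:connection2dili}, every remaining ingredient coincides term-by-term with DILI's formula (Equation (40) of \cite{cui16}), so that the only surviving discrepancy is the claimed ratio of determinants.

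First I would evaluate the determinant $|\aKo(v)^{-1/2}|$ that multiplies $\lambda(w^*;v)$. From \eqref{eq:apx_postcov} we have $\aKo(v) = \Io + \Vo_r (D_r(v) - I_r)\Vo_r^*$ with $\Vo_r^*\Vo_r = I_r$, so $\aKo(v)$ acts as $D_r(v)$ on the intrinsic subspace $\Hi_r = \Imag(\Vo_r)$ and as the identity on its complement $\Hi_\perp$. Hence $\aKo(v)^{-1/2}$ acts as $D_r(v)^{-1/2}$ on $\Hi_r$ and as the identity on $\Hi_\perp$, and since $\aKo(v)^{-1/2} - \Io$ is finite-rank the determinant reduces to $|\aKo(v)^{-1/2}| = |D_r(v)|^{-1/2}$.

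Next I would assemble the ratio $\kappa(v',v)/\kappa(v,v')$. Because $\lambda(\cdot\,;v)$ carries the factor $|\aKo(v)^{-1/2}| = |D_r(v)|^{-1/2}$, the numerator and denominator contribute $|D_r(v')|^{-1/2}$ and $|D_r(v)|^{-1/2}$, which combine into the single factor $|D_r(v)|^{1/2}/|D_r(v')|^{1/2}$. It then remains to check that the other constituents of $\lambda$ — the quadratic form $\tfrac{h}{8}|\aKo(v)^{-1/2}\hat g(v)|^2$, the cross term $\tfrac{\sqrt h}{2}\langle \aKo(v)^{-1/2}g(v), \aKo(v)^{-1/2}w^*\rangle$, and the Hessian term $\tfrac12\langle w^*, \aHo(v) w^*\rangle$ — reproduce exactly their DILI counterparts. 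This is essentially the content of the connection \eqref{eq:connection2dili}: under that substitution the proposal \eqref{eq:split-prop3} coincides with the DILI proposal \eqref{eq:dili-prop}, and the Feldman-Hajek Radon-Nikodym derivative underlying \eqref{eq:acpt_DRinfmMALA} is built from the same operators $\aKo$, $\aHo$, $g$ and $\nabla_v\Phi$ evaluated at the same locations, so these terms match verbatim.

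The decisive point is that in DILI the operators $A,B,G$ (equivalently $D_r$) are frozen at the globally computed LIS, so $D_r(v) = D_r(v')$ and the determinant cancels identically in DILI's ratio, whereas in DR-$\infty$-mMALA these quantities depend on position $v$ through \eqref{eq:apx_postcov}. Thus the only difference between the two acceptance probabilities is the surviving factor $|D_r(v)|^{1/2}/|D_r(v')|^{1/2}$, exactly the correction adjusting for the change of local geometry between $v$ and $v'$. The step I expect to demand the most care is the term-by-term bookkeeping in the previous paragraph: one must confirm that, once the globalization assumption is dropped, no other position-dependence in \eqref{eq:acpt_DRinfmMALA} produces an additional mismatch, so that every factor other than the determinant is structurally identical to its DILI counterpart.
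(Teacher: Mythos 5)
Your identification of the determinant factor is correct and matches the paper: since $\aKo(v)=\Io+\Vo_r(D_r(v)-I_r)\Vo_r^*$ acts as $D_r(v)$ on $\Imag(\Vo_r)$ and as the identity on the complement, $|\aKo(v)^{-1/2}|=|D_r(v)|^{-1/2}$, and the ratio $\kappa(v',v)/\kappa(v,v')$ picks up exactly $|D_r(v)|^{1/2}/|D_r(v')|^{1/2}$. The gap is in the step you yourself flag as delicate and then dispose of by assertion: that every remaining term in \eqref{eq:acpt_DRinfmMALA} ``matches verbatim'' its counterpart in Equation (40) of \cite{cui16} because the proposals coincide under \eqref{eq:connection2dili}. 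That justification is circular precisely where the corollary has content. When $D_r$ depends on position, the DR-$\infty$-mMALA kernel is \emph{not} the DILI kernel (DILI's $A,B,G$ are frozen), so no kernel-level coincidence is available to transfer to the acceptance formulas; and when $D_r$ is frozen, the determinant ratio is $1$ and there is nothing to isolate. So the formula-level identity must be established by direct computation, and this computation is the actual substance of the paper's proof.

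Concretely, the paper substitutes \eqref{eq:lr-apx}, \eqref{eq:apx_postcov}, \eqref{eq:apx_postcov2} into \eqref{eq:acpt_DRinfmMALA} and completes the square to obtain
\begin{equation*}
\log\lambda(w^*;v)=-\half\left\Vert \tfrac{\sqrt h}{2}D_r^{\half}g_r(v)-D_r^{-\half}\Vo_r^* w^*\right\Vert^2+\half\Vert\Vo_r^* w^*\Vert^2-\half\log|D_r|,\qquad g_r(v)=\Lambda_r\Vo_r^* v-\gamma_r\Vo_r^*\nabla_v\Phi(v),
\end{equation*}
and then verifies two short but essential identities: using $D_r\Lambda_r=I_r-D_r$ and $\rho_0+\rho_1=1$, the completed-square argument equals $-D_{B_r}^{-1}\bigl(\Vo_r^* v'-D_{A_r}\Vo_r^* v-D_{G_r}\gamma_r\Vo_r^*\nabla_v\Phi(v)\bigr)$, i.e.\ exactly DILI's Gaussian transition exponent under \eqref{eq:connection2dili}; and using $\rho_0^2+\rho_2^2=1$, the symmetrized term telescopes, $\Vert\Vo_r^* w^*(v,v')\Vert^2-\Vert\Vo_r^* w^*(v',v)\Vert^2=\Vert\Vo_r^* v'\Vert^2-\Vert\Vo_r^* v\Vert^2$, again matching Equation (40) of \cite{cui16}. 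Without these checks one cannot rule out an extra mismatch (e.g.\ a residual cross term or a different quadratic weight) beyond the determinant, which is exactly what the corollary claims to exclude. Your outline would become a complete proof by carrying out this algebra; as written, it assumes the conclusion of the hardest step.
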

\proof{See \ref{apx:cor-acpt_compare}.}
\begin{rk}
For the adaptive implementation (detailed below), the extra determinant term could help our proposed method to adapt to the local geometry.
However, once the global LIS is identified, $D_r$ will be fixed at principal eigenvalues, and then these two acceptance probabilities become identical.
\end{rk}

Based on the low-rank approximation to the GNH of the log-posterior, we have obtained a likelihood informed splitting proposal \eqref{eq:split-prop2}.
The low-rank approximation is achieved through partial eigen-decomposition \eqref{eq:geigu} or \eqref{eq:eigv},
which can be efficiently calculated through e.g. Krylov-subspace methods \citep{simpson08} or randomized algorithms \citep{halko11,saibaba16,liberty07}.
However, when the parameter space does not have much variation in curvature, repeated execution of partial spectral decomposition at each location $v$ may outweigh its geometric benefit.
We can adopt the similar adaptation procedure of globalizing GNH \citep{brand02} as in DILI \citep{cui16}.
To distinguish from the location-dependent implementation (DR-$\infty$-mMALA), we name the new algorithm as \emph{adaptive Dimension-Reduced $\infty$-dimensional manifold MALA (aDR-$\infty$-mMALA)} and summarize it in Algorithm \ref{alg:aDR-infmMALA}.
\begin{algorithm}[t]
\caption{Adaptive Dimension-Reduced $\infty$-dimensional manifold MALA (aDR-$\infty$-mMALA)}
\label{alg:aDR-infmMALA}
\centering
\begin{algorithmic}[1]
\REQUIRE During the LIS construction, we retain $(1)\; \{\Lambda_m, \Vo_m\}$ to store the expected GNH evaluated from $m$ samples; and $(2)$ the value of F\"orstner distance $d_{\mathcal F}$ between the most recent two updates of the expected GNH, for LIS convergence monitoring.
\REQUIRE At step $n$, given the state $v_n$, LIS basis $\Vo_r$, and operators $A, B, G$ induced by $\{\Vo_r, D_r, h\}$, one step the algorithm is:
\STATE Compute a candidate $v'=q_n(v_n, \cdot; A, B, G)$ using either LI-prior $(\gamma_r=0)$ or LI-Langevin $(\gamma_r=1)$
\STATE Compute the acceptance probability $a(v_n, v')$
%\IF{$\mathrm{Unif}(0,1]<a(v_n, v')$} $v_{n+1}=v'$ \ELSE{} $v_{n+1}=v_n$ \ENDIF
\IfThenElse {$\mathrm{Unif}(0,1]<a(v_n, v')$} {$v_{n+1}=v'$} {$v_{n+1}=v_n$}
\IF{$\mathrm{rem}(n+1, n_{\textrm{lag}})=0$ \& $m<m_{\max}$ \& $d_{\mathcal F}\geq \Delta_{\textrm{LIS}}$}
\STATE $\textrm{UpdateLIS}(\Vo_m, \Lambda_m, v_{m+1}; \Vo_{m+1}, \Lambda_{m+1}, \Vo_{r'}, \Lambda_{r'})$
\STATE Update the LIS convergence diagnostic $d_{\mathcal F}$
\STATE $\Vo_r\leftarrow \Vo_{r'}$, $\Lambda_r\leftarrow \Lambda_{r'}$, $m\leftarrow m+1$
\STATE Update $D_r=(I_r+\Lambda_r)^{-1}$
\STATE Update the operators $\{A, B, G\}$ 
\ENDIF
\end{algorithmic}
\end{algorithm}

\subsection{Dimension-Reduced $\infty$-mHMC}
Similarly as $\infty$-mHMC, one can generalize the above (adaptive) dimension-reduced manifold MALA algorithms to multi-step `HMC' algorithms.
In whitened coordinates, the discretized dynamics \eqref{eq:mHDdiscret} becomes as follows, for $g$ defined as in \eqref{eq:ngrad-whiten}:
\begin{equation}\label{eq:mHDdiscret-whiten}
\begin{aligned}
\tilde v^- &= \tilde v_0 + \tfrac{\epsilon}{2}\,g(v_0)\ ; \\
\begin{bmatrix} v_\epsilon\\ \tilde v^{+}\end{bmatrix} &= \begin{bmatrix} \cos\epsilon & \sin\epsilon\\ -\sin\epsilon & \cos\epsilon
\end{bmatrix}  \begin{bmatrix} v_0\\ \tilde v^{-}\end{bmatrix}\  ;\\
\tilde v_\epsilon &= \tilde v^{+} + \tfrac{\epsilon}{2}\,g(v_\epsilon)\  .
\end{aligned}
\end{equation}
With the same notations, we denote the $\infty$-mHMC proposal in the whitened coordinates as
\begin{equation*}
v' =\mathcal{P}_v\big\{{\Psi}_{\epsilon}^{I}(v,\tilde v)\big\}\ , \quad \tilde v\sim\mathcal{N}(0,\Ko(v))
\ . 
\end{equation*}
Then the acceptance probability (c.f. Algorithm 3.11 $\infty$-mHMC in \cite{beskos2017}) can be reformulated as follows
\begin{equation*}\label{eq:deltaE-whiten}
\begin{aligned}
a(v, v') =& 1 \wedge \exp \left\{- \Delta E(v, v')\right\} \\
\Delta E(v_0,\tilde v_0) 
=&\Phi(v_I)-\Phi(v_0)+\tfrac{1}{2}
\langle \tilde v_I, (\Ko^{-1}(v_I)-\Io)\tilde v_I\rangle -
\tfrac{1}{2}\langle \tilde v_0, (\Ko^{-1}(v_0)-\Io)\tilde v_0\rangle \\[0.2cm]
&-\log|\Ko^{-\half}(v_I)|
+\log|\Ko^{-\half}(v_0)|
-\tfrac{\epsilon^2}{8}\big(\,|g(v_I)|^2-
|g(v_0)|^2\,\big) %\\ &
+\tfrac{\epsilon}{2} \sum_{i=0}^{I-1}\big(\,\langle g(v_{i}), \tilde v_i\rangle 
+ \langle g(v_{i+1}),\tilde v_{i+1}%
\rangle\,\big)  \ . 
\end{aligned}
\end{equation*}

With the approximation \eqref{eq:apx_postcov}, one can approximate the gradient $g(v)$ in \eqref{eq:ngrad-whiten} as follows ($\gamma_\perp=0$):
\begin{equation}\label{eq:ngrad-aprx}
g(v) \approx \hat g(v) := (\Io - \aKo(v)) v - \aKo(v) s(\gamma) \nabla_v\Phi(v) = -\Vo_r (D_r-I_r) \Vo_r^* v - \Vo_r D_r \gamma_r \Vo_r^* \nabla_v\Phi(v)
= \Vo_r D_r g_r(v) 
\end{equation}
where $g_r(v):= \Lambda_r \Vo_r^* v - \gamma_r \Vo_r^* \nabla_v \Phi(v) \in \bbR^r$.
Using the above approximation \eqref{eq:ngrad-aprx}, we can rewrite the proposal \eqref{eq:mHDdiscret-whiten} as
\begin{equation}\label{eq:mHDdiscret-aprx}
\begin{aligned}
\tilde v^- &= \tilde v_0 + \tfrac{\epsilon}{2}\,\Vo_r D_r g_r(v_0)\ ; \\
\begin{bmatrix} v_\epsilon\\ \tilde v^{+}\end{bmatrix} &= \begin{bmatrix} \cos\epsilon & \sin\epsilon\\ -\sin\epsilon & \cos\epsilon
\end{bmatrix}  \begin{bmatrix} v_0\\ \tilde v^{-}\end{bmatrix}\  ;\\
\tilde v_\epsilon &= \tilde v^{+} + \tfrac{\epsilon}{2}\,\Vo_r D_r g_r(v_\epsilon)\  .
\end{aligned}
\end{equation}
Based on this proposal, we can develop \emph{Dimension-Reduced $\infty$-dimensional manifold HMC (DR-$\infty$-mHMC)} as a multi-step generalization of DR-$\infty$-mMALA.
The following theorem establishes the validity of DR-$\infty$-mHMC and provides the acceptance probability.
\begin{thm}\label{thm:DR-infmHMC}
Under the assumptions \ref{asump:eig_decay} and \ref{asump:fwd_diff}, DR-$\infty$-mHMC with proposal \eqref{eq:mHDdiscret-aprx} is well-defined on Hilbert space $\Hi$, with the acceptance probability specified as follows
\begin{equation*}\label{eq:deltaE-whiten-aprx}
\begin{aligned}
\Delta E(v_0,\tilde v_0) 
=&\Phi(v_I)-\Phi(v_0)+\tfrac{1}{2}
\Vert \Lambda_r^{\half}(v_I) \Vo_r^* \tilde v_I \Vert^2 -
\tfrac{1}{2}\Vert \Lambda_r^{\half}(v_0) \Vo_r^* \tilde v_0 \Vert^2 \\[0.2cm]
&+ \half \log |D_r(v_I)|
-\half \log |D_r(v_0)|
-\tfrac{\epsilon^2}{8}\big(\,|D_r(v_I)g_r(v_I)|^2-
|D_r(v_0)g_r(v_0)|^2\,\big) \\
&+\tfrac{\epsilon}{2} \sum_{i=0}^{I-1}\big(\,\langle D_r(v_i)g_r(v_{i}), \Vo_r^* \tilde v_i\rangle 
+ \langle D_r(v_{i+1})g_r(v_{i+1}),\Vo_r^* \tilde v_{i+1}%
\rangle\,\big)  \ . 
\end{aligned}
\end{equation*}
\end{thm}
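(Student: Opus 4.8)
The plan is to observe that DR-$\infty$-mHMC is exactly the $\infty$-mHMC scheme of \cite{beskos2017}, written in the whitened coordinates $v=\Co^{-\half}u$, in which the GAP metric $\Ko(v)$ and the natural gradient $g(v)$ appearing in the leapfrog integrator \eqref{eq:mHDdiscret-whiten} are replaced throughout by their finite-rank surrogates $\aKo(v)$ from \eqref{eq:apx_postcov} and $\hat g(v)=\Vo_r D_r g_r(v)$ from \eqref{eq:ngrad-aprx}. Both the well-posedness and the energy increment $\Delta E$ may therefore be inherited from the corresponding $\infty$-mHMC statement, and the specialized formula follows by substituting the approximations term by term. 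The argument runs in complete parallel to the single-step case (Theorem \ref{thm:DR-infmMALA}), the only additional ingredient being the concatenation of $I=\lfloor\tau/\epsilon\rfloor$ leapfrog maps, whose per-step contributions telescope.

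For well-posedness I would use that $\aKo(v)=\Io+\Vo_r(D_r-I_r)\Vo_r^*$ is a bounded, self-adjoint, strictly positive operator differing from the identity only on the $r$-dimensional subspace $\Hi_r$, with explicit inverse $\aKo(v)^{-1}=\Io+\Vo_r\Lambda_r\Vo_r^*=\Io+\aHo(v)$ and square root \eqref{eq:apx_postcov2}. One checks that these surrogates satisfy the regularity hypotheses underlying the $\infty$-mHMC construction of \cite{beskos2017}: $\aKo(v)-\Io$ is finite-rank, and under Assumption \ref{asump:fwd_diff} the drift $\hat g(v)$ inherits the required Cameron--Martin regularity, being a finite combination of $\{v_i(x)\}_{i\le r}$ weighted by $\Lambda_r$ and $\Vo_r^*\nabla_v\Phi(v)$. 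Since $\aKo$ acts as the identity on $\Hi_\perp$, the Feldman--Hajek comparison of the momentum laws $\mathcal N(0,\aKo(v_0))$ and $\mathcal N(0,\aKo(v_I))$ collapses to a finite-dimensional computation on $\Hi_r$, so mutual absolute continuity is automatic and the Radon--Nikodym derivative in \eqref{eq:MH3} is explicit; this is in fact simpler than the exact $\infty$-mHMC case, where $\Ko(v)-\Io$ need not be finite-rank. Rigorously, the identity covariance in whitened coordinates is interpreted as in DILI \citep{cui16}, the statement being made on $\Hi$ after inverting $u=\Co^{\half}v$.

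I would then obtain the stated $\Delta E$ by substituting the surrogates into the $\infty$-mHMC energy increment displayed immediately before the theorem. The relevant identities are $\langle\tilde v,(\aKo^{-1}(v)-\Io)\tilde v\rangle=\langle\tilde v,\Vo_r\Lambda_r\Vo_r^*\tilde v\rangle=\Vert\Lambda_r^{\half}(v)\Vo_r^*\tilde v\Vert^2$ for the kinetic terms; $-\log|\aKo^{-\half}(v)|=\half\log|\aKo(v)|=\half\log|D_r(v)|$ for the volume terms, where the infinite-dimensional determinant collapses to the finite determinant $|D_r|$ because $\aKo$ is the identity off $\Hi_r$; $|\hat g(v)|^2=|\Vo_r D_r g_r(v)|^2=|D_r g_r(v)|^2$ by the isometry $\Vo_r^*\Vo_r=I_r$; and $\langle\hat g(v_i),\tilde v_i\rangle=\langle D_r g_r(v_i),\Vo_r^*\tilde v_i\rangle$ for the cross terms in the leapfrog sum. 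Collecting these reproduces the claimed expression for $\Delta E(v_0,\tilde v_0)$.

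The main obstacle will be the position dependence of the reduced basis and spectrum $\{\Vo_r(v),\Lambda_r(v),D_r(v)\}$ along the trajectory $v_0,\dots,v_I$: since they are re-evaluated at each $v_i$, I must confirm that the integrator \eqref{eq:mHDdiscret-aprx} retains the volume and momentum-flip reversibility needed for detailed balance with respect to the augmented target $\mu(dv)\,\mathcal N(0,\aKo(v))(d\tilde v)$, so that the per-step increments telescope into endpoint determinant, kinetic and $|\hat g|^2$ terms together with the trajectory sum, leaving no unaccounted Jacobian. As in the simplified manifold schemes of \cite{beskos2017,girolami11} the Christoffel correction is omitted, so exact preservation fails and the residual is precisely what $\Delta E$ records; the remaining task is to verify that the finite-rank approximation introduces no boundary term beyond those already at $v_0$ and $v_I$, which follows because the rotation step in \eqref{eq:mHDdiscret-whiten} is carried out in the metric-independent whitened coordinates and only the kicks carry the position-dependent $\aKo$.
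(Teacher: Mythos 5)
Your proposal is correct and takes essentially the same approach as the paper, whose entire proof is to invoke Theorem 3.10 of \cite{beskos2017} with $\Ko(u)$ replaced by $\aKo(v)$ and $g(u)$ replaced by $\hat g(v)$. Your term-by-term substitution identities (kinetic term via $\aKo^{-1}-\Io=\Vo_r\Lambda_r\Vo_r^*$, determinant collapsing to $\tfrac12\log|D_r|$, $|\hat g(v)|=|D_r g_r(v)|$ by the isometry $\Vo_r^*\Vo_r=I_r$, and the cross terms) correctly reproduce the stated $\Delta E$, so you have in effect supplied the details the paper leaves implicit.
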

\begin{proof}
It can be proved by closely following Theorem 3.10 of \cite{beskos2017}, with $\Ko(u)$ replaced by $\aKo(v)$ and $g(u)$ replaced by $\hat g(v)$.
\end{proof}
Parallel to aDR-$\infty$-mMALA,
the corresponding algorithm with the adaptation used in DILI, named \emph{adaptive Dimension-Reduced $\infty$-dimensional manifold HMC (aDR-$\infty$-mHMC)}, is the similar to Algorithm \ref{alg:aDR-infmMALA} with $q_n$ being replaced by the above multi-step proposal \eqref{eq:mHDdiscret-aprx}.

Now we give some error bounds for comparing different proposals mentioned above.
\begin{thm}\label{thm:bounds}
Assume that Gauss-Newton Hessian $\Ho(v)$ is a trace-class operator defined on Hilbert space $\Hi$. Then we have the following bounds for the difference between proposals.
\begin{itemize}
\item $\Vert v'_\text{\tiny DR-$\infty$-mMALA} - v'_\text{\tiny $\infty$-mMALA} \Vert \leq 
\begin{cases}
\rho_1 \frac{\lambda_{r+1}}{\lambda_{r+1}+1}( \Vert v\Vert + \Vert \nabla_v\Phi(v)\Vert ) + \rho_2 \frac{\lambda_{r+1}}{\lambda_{r+1}+1 + \sqrt{\lambda_{r+1}+1}} \Vert \xi\Vert , & \gamma_\perp=1\\
\rho_1 \left( \frac{\lambda_{r+1}}{\lambda_{r+1}+1} \Vert v\Vert + \Vert \nabla_v\Phi(v)\Vert \right) + \rho_2 \frac{\lambda_{r+1}}{\lambda_{r+1}+1 + \sqrt{\lambda_{r+1}+1}} \Vert \xi\Vert, & \gamma_\perp=0
\end{cases}$
\item $\Vert v'_\text{\tiny DR-$\infty$-mMALA} - v'_\text{\tiny DILI} \Vert \leq 
\rho_1 \Vert D_r-K_r \Vert_2 (\Vert v\Vert + \Vert \nabla_v\Phi(v)\Vert ) + \rho_2 \Vert D_r^\half-K_r^\half \Vert_2 \Vert \xi\Vert$
\item $\Vert v^I_\text{\tiny DR-$\infty$-mHMC} - v^I_\text{\tiny $\infty$-mHMC} \Vert \leq \mathcal O(\lambda_{r+1})$ if we assume $g(v)$ is Lipschitz continuous and $\gamma_\perp=1$.
\end{itemize}
\end{thm}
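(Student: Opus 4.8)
The plan is to prove all three estimates by reducing each proposal difference to a fixed operator acting on the triple $(v,\nabla_v\Phi(v),\xi)$ and then bounding spectral norms. The common engine is that both $\Ko(v)-\aKo(v)$ and $\sqrt{\Ko(v)}-\aKo(v)^\half$ are supported on the discarded complement $\Hi_\perp$. From $\Ko=\Io+\Vo(D-\Io)\Vo^*$ and the definition of $\aKo$ in \eqref{eq:apx_postcov}, subtracting the retained block gives $\Ko-\aKo=\Vo_\perp(D_\perp-\Io_\perp)\Vo_\perp^*$ and $\sqrt{\Ko}-\aKo^\half=\Vo_\perp(\sqrt{D_\perp}-\Io_\perp)\Vo_\perp^*$, with $D_\perp=(\Io_\perp+\Lambda_\perp)^{-1}$. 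Since $\Vo_\perp$ is an isometry onto its range, the spectral norm equals the largest magnitude diagonal entry; because $\lambda\mapsto\lambda/(1+\lambda)$ is increasing and $\lambda_{r+1}$ is the top eigenvalue on $\Hi_\perp$, I obtain $\Vert\Ko-\aKo\Vert_2=\lambda_{r+1}/(1+\lambda_{r+1})$ and, after rationalizing $1/\sqrt{1+\lambda}-1$, $\Vert\sqrt{\Ko}-\aKo^\half\Vert_2=\lambda_{r+1}/(\lambda_{r+1}+1+\sqrt{\lambda_{r+1}+1})$. The trace-class hypothesis on $\Ho$ guarantees these objects are well defined and that $\lambda_{r+1}\to 0$.

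For the first estimate I would write both Langevin proposals in the reformulated shape \eqref{eq:reform-infmMALA}, so that $v'_\text{\tiny $\infty$-mMALA}-v'_\text{\tiny DR-$\infty$-mMALA}$ separates into a $v$-term, a gradient term and a $\xi$-term. When $\gamma_\perp=1$ the preconditioner is the only discrepancy, so the $v$- and gradient-terms each carry the factor $\Vert\Ko-\aKo\Vert_2$ and the triangle inequality yields the first line. When $\gamma_\perp=0$ the DR proposal additionally omits the gradient on $\Hi_\perp$; using $\aKo\Vo_r=\Vo_r D_r$ and $\Ko=\Vo D\Vo^*$ the surviving gradient discrepancy is $\Vo_\perp D_\perp\Vo_\perp^*\nabla_v\Phi$, whose norm I bound crudely by $\Vert D_\perp\Vert_2\le 1$ times $\Vert\nabla_v\Phi\Vert$, producing the undamped $\Vert\nabla_v\Phi\Vert$ term of the second line. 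The second estimate is even more direct: with the identifications \eqref{eq:connection2dili} both proposals share the basis $\Vo_r$ but use $D_r$ against the empirical $K_r$, so the difference consists of $\Vo_r$-conjugations of $D_r-K_r$ on the $v$- and gradient-terms and of $\sqrt{D_r}-\sqrt{K_r}$ on the noise term, and the isometry of $\Vo_r$ turns these into $\Vert D_r-K_r\Vert_2$ and $\Vert\sqrt{D_r}-\sqrt{K_r}\Vert_2$.

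The third estimate is the only genuinely dynamical one and is where the real work lies. I would first record, from \eqref{eq:ngrad-aprx} with $\gamma_\perp=1$, that $g(v)-\hat g(v)=-(\Ko-\aKo)(v+\nabla_v\Phi)$, so $\Vert g(v)-\hat g(v)\Vert\le\frac{\lambda_{r+1}}{1+\lambda_{r+1}}(\Vert v\Vert+\Vert\nabla_v\Phi\Vert)=\mathcal O(\lambda_{r+1})$ uniformly along a trajectory confined to a bounded set. I would then propagate the coupled errors $e^v_i=v_i^\text{\tiny full}-v_i^\text{\tiny DR}$ and $e^{\tilde v}_i$ through one leapfrog step \eqref{eq:mHDdiscret-whiten}: each half-kick contributes $\tfrac{\epsilon}{2}$ times $g(v_i^\text{\tiny full})-\hat g(v_i^\text{\tiny DR})$, which I split into a Lipschitz part $L\Vert e^v_i\Vert$ and the $\mathcal O(\lambda_{r+1})$ consistency part, while the symplectic rotation is an isometry on the $(v,\tilde v)$ block and does not amplify the accumulated error. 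This gives the recursion $\delta_{i+1}\le(1+C\epsilon)\delta_i+C'\epsilon\,\lambda_{r+1}$ for $\delta_i=\Vert e^v_i\Vert+\Vert e^{\tilde v}_i\Vert$, and discrete Gronwall closes it as $\delta_I\le e^{C\tau}\tau\,C'\lambda_{r+1}=\mathcal O(\lambda_{r+1})$ over the fixed horizon $\tau=I\epsilon$. The main obstacle is exactly this step: I must ensure the Lipschitz constant of $g$ and the consistency bound are uniform along the whole discrete orbit, which needs an a priori boundedness estimate for the leapfrog iterates (finite horizon, bounded step size, trace-class $\Ho$); without such control the constant in the Gronwall exponential cannot be kept independent of $r$.
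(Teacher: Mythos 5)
Your proposal is correct and follows essentially the same route as the paper's own proof: the same observation that $\Ko(v)-\aKo(v)$ and $\Ko(v)^{\half}-\aKo(v)^{\half}$ are supported on $\Hi_\perp$ with spectral norms $\tfrac{\lambda_{r+1}}{1+\lambda_{r+1}}$ and $\tfrac{\lambda_{r+1}}{\lambda_{r+1}+1+\sqrt{\lambda_{r+1}+1}}$, the same identification of the DILI proposal via $\tilde\Ko(v)=\Vo_r K_r\Vo_r^*+\Vo_\perp\Vo_\perp^*$, and the same leapfrog error propagation splitting $\hat g(v_i^{\text{\tiny DR}})-g(v_i)$ into a Lipschitz part plus an $\mathcal O(\lambda_{r+1})$ consistency part, closed by a discrete Gronwall-type recursion with $d_0=\tilde d_0=0$. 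The uniform-boundedness caveat you flag for the HMC bound is real but benign here, since the horizon is finite ($I$ steps) and the paper likewise absorbs $\Vert v_i^{\text{\tiny DR}}\Vert+\Vert\nabla_v\Phi(v_i^{\text{\tiny DR}})\Vert$ into the $\mathcal O(\lambda_{r+1})$ constants.
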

\proof{See \ref{apx:thm-bounds}.}
\begin{rk}
This theorem describes the asymptotic behavior of the proposed dimension reduced algorithms when the dimension of intrinsic subspace goes to infinity. It quantifies the differences between the dimension reduced proposals and their full versions, 
which bound the ``loss" in the quality of geometry-informed MCMC proposals by doing dimension reduction.
If $\gamma_\perp=1$, dimension reduced $\infty$-GMC algorithms are asymptotically close to their full versions, i.e.
$\Vert v'_\text{\tiny DR-$\infty$-mMALA} - v'_\text{\tiny $\infty$-mMALA} \Vert \to 0$ and $\Vert v^I_\text{\tiny DR-$\infty$-mHMC} - v^I_\text{\tiny $\infty$-mHMC} \Vert \to 0$ as $r\to \infty$.
Proper $r$ can be chosen by thresholding the last eigenvalue $\lambda_{r+1}$ to some values for local or global LIS' respectively, (e.g. $\rho_\text{l}=\rho_\text{g}=0.01$).
By doing so, one can control the precision of these approximations.
Note that the disparity between DILI and DR-$\infty$-mMALA is not guaranteed to be small, but could be significantly large in their performance. See the example in Section \ref{sec:rans}.
\end{rk}
\begin{rk}
\cite{vogel02} shows that the Hessian of potential function \eqref{eq:gauss_nz} is a compact operator whose range space is independent of mesh resolution,
thus it naturally admits $r$-dimensional low-rank approximation $\Ho(v)$ with $r\leq m$, the size of observations \citep{martin12}. 
In this case, there are at most $m$ non-zero eigenvalues $\lambda_i>0$.
\end{rk}

In this paper, we make use of randomized algorithms for (generalized) eigen-decomposition \citep{halko11,saibaba16,liberty07} for the low-rank approximation \eqref{eq:lr-apx}.
If the dimension of the discretized parameter space is $N$, then dimension reduced algorithms lower the computational cost for these two $\infty$-GMC algorithms from prohibitive cubic scale $\mathcal O(N^3)$ to affordable linear scale $\mathcal O(N(r+p)^2)$, where $p$ is the oversampling size with common choice $p=5$ in practice \citep{halko11}.
Note, in \cite{beskos2017} we sample Karhunen-Lo\`eve coefficients \eqref{eq:KL} of dimension about $100$.
Here we sample the whole field (discretized function defined on a domain) whose dimension could scale up to thousands.
Full implementation of the standard $\infty$-GMC algorithms (e.g. $\infty$-mMALA or  $\infty$-mHMC) will be impractical due to the memory bound and time consumption.
In the next section, we will demonstrate the computational advantage of our proposed dimension reduced algorithms using numerical examples.

%%%%%%%%%%%%%%%%%%%%%%%%%%%%%%%%%%%%%%%%%%

\begin{figure}[t]
\begin{subfigure}[b]{1\textwidth}
\includegraphics[width=1\textwidth,height=.28\textwidth]{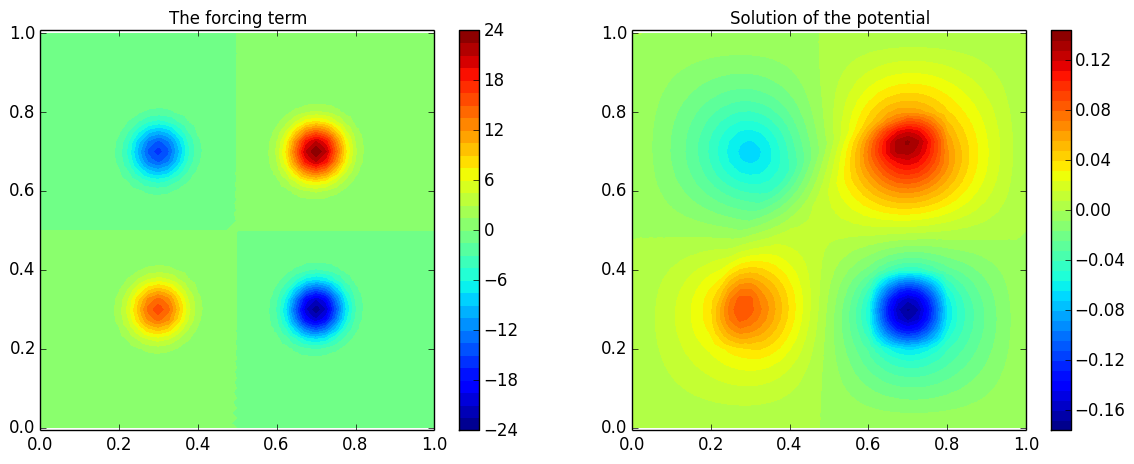}
\caption{Forcing field $f(s)$ (left), and the solution $p(s)$ with true transmissivity field $k_0(s)$ (right).}
\label{fig:force_soln}
\end{subfigure}
\begin{subfigure}[b]{1\textwidth}
\includegraphics[width=1\textwidth,height=.28\textwidth]{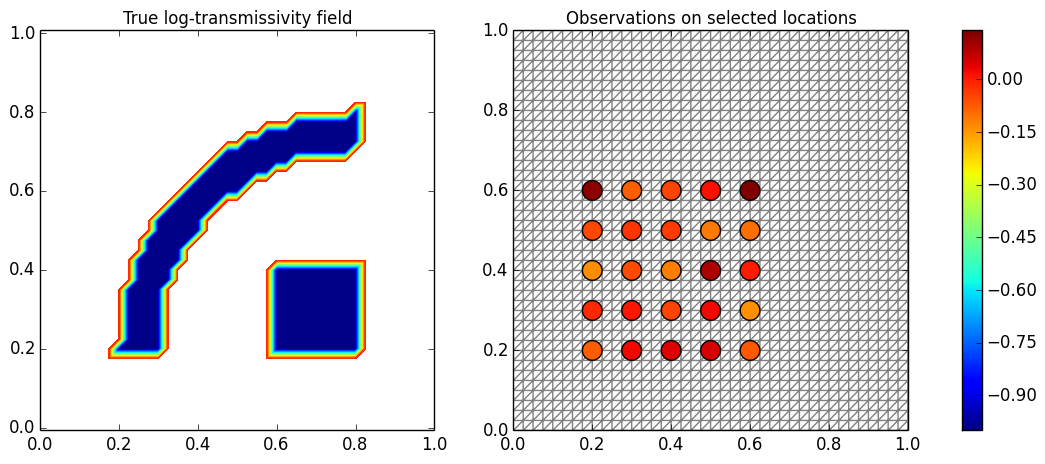}
\caption{True log-transmissivity field $u_0(s)$ (left), and $25$ observations on selected locations indicated by circles (right), with color indicating their values.}
\label{fig:truth_obs}
\end{subfigure}
\end{figure}

\section{Numerical Experiments}\label{sec:numerics}
In this section, we first consider the simulated elliptic inverse problem same as in \cite{cui16} for two cases. This is done for a parallel comparison with DILI.
Then we investigate an inverse problem in Reynolds-Averaged Navier-Stokes (RANS) equation for turbulent combustion.
This problem involves highly nonlinear PDE which is much expensive to solve, and serves as a good benchmark for testing the performance of all the above mentioned algorithms.
Python codes are publicly available at \url{https://bitbucket.org/lanzithinking/dimension-reduced-geom-infmcmc}.

\subsection{Elliptic Inverse Problem}

%\begin{figure}[htbp]
%\begin{center}
%\includegraphics[width=1\textwidth,height=.35\textwidth]{truth_obs_elliptic}
%\caption{Elliptic inverse problem: the true log-transmissivity field $u_0(s)$ (left); and the $25$ observations on selected locations indicated by circles (right), with color indicating their values correspondingly.}
%\label{fig:truth_obs}
%\end{center}
%\end{figure}

The following elliptic PDE is defined on the unit square domain $\Omega=[0,1]^2$:
\begin{equation*}\label{eq:elliptic}
\begin{aligned}
-\nabla \cdot (k(s) \nabla p(s)) &= f(s), \; s\in \Omega \\
\langle k(s) \nabla p(s), \vec n(s) \rangle &= 0, \; s \in \pa\Omega \\
\int_{\pa\Omega} p(s) dl(s) &= 0
\end{aligned}
\end{equation*}
where $k(s)$ is the transmissivity field, $p(s)$ is the potential function, $f(s)$ is the forcing term,
and $\vec n(s)$ is the outward normal to the boundary.
%This is a well-posed boundary value problem. 
The source/sink term $f(s)$ is defined by the superposition of four weighted Gaussian plumes with standard deviation $0.05$, 
centered at $[0.3, 0.3],\, [0.7, 0.3],\, [0.7, 0.7],\, [0.3, 0.7]$, with weights $\{2, -3, 3, -2\}$ respectively, as shown in the left panel of Figure \ref{fig:force_soln}.

\begin{figure}[t]
\begin{center}
\includegraphics[width=1\textwidth,height=.4\textwidth]{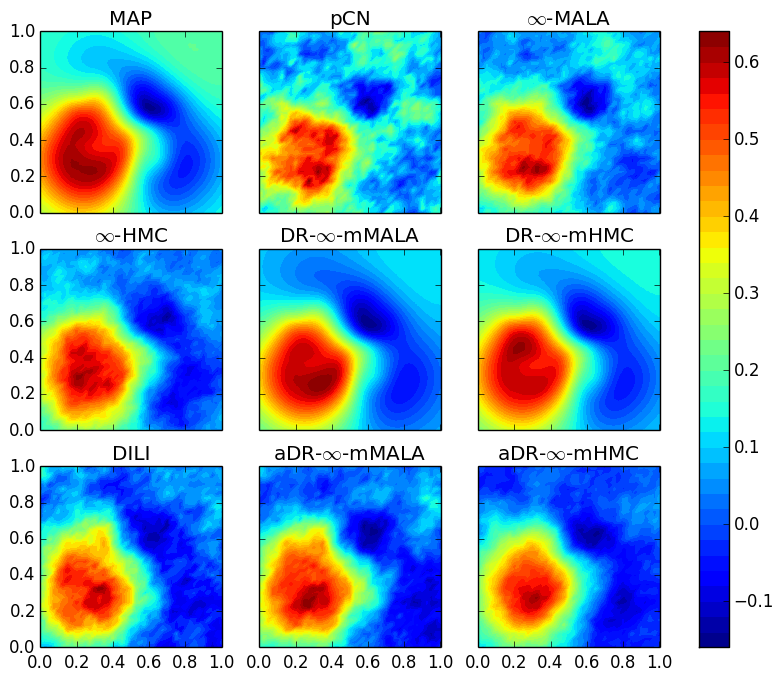}
\caption{Elliptic inverse problem ($\textrm{SNR}=10$): Bayesian posterior mean estimates of the log-transmissivity field $u(s)$ based on $2000$ samples by various MCMC algorithms; the upper-left corner shows the MAP estimate.}
\label{fig:est_snr10}
\end{center}
\end{figure}

% latex table generated in R 3.3.2 by xtable 1.8-2 package
% Fri Mar 24 19:40:02 2017
\begin{table}[ht]
\centering
\begin{tabular}{l|ccccccc}
  \hline
Method & h & AP & s/iter & ESS(min,med,max) & minESS/s & spdup & PDEsolns \\ 
  \hline
pCN & 0.50 & 0.74 & 0.79 & (52.91,171.86,283.42) & 0.0336 & 1.00 & 2501 \\ 
  $\infty$-MALA & 2.00 & 0.75 & 1.53 & (223.26,644.39,912.85) & 0.0732 & 2.18 & 5002 \\ 
  $\infty$-HMC & 1.30 & 0.71 & 3.68 & (897.11,1469.7,2000) & 0.1219 & 3.63 & 12342 \\ 
  DR-$\infty$-mMALA & 6.00 & 0.72 & 9.24 & (696.83,1040.67,2000) & 0.0377 & 1.12 & 80032 \\ 
  DR-$\infty$-mHMC & 4.00 & 0.78 & 22.47 & (887.24,1212.38,1699.42) & 0.0197 & 0.59 & 198176 \\ 
  DILI & (0.5,\,1.0) & 0.70 & 1.60 & (214.58,580.81,807.74) & 0.0670 & 1.99 & 5806 \\ 
  aDR-$\infty$-mMALA & 3.00 & 0.75 & 1.55 & (336.75,887.82,1180.41) & 0.1084 & {\bf 3.23} & 5806 \\ 
  aDR-$\infty$-mHMC & 1.50 & 0.77 & 3.80 & (1053.06,1956.4,2000) & 0.1387 & {\bf 4.13} & 13370 \\ 
   \hline
\end{tabular}
\caption{Sampling efficiency in elliptic inverse problem (SNR=10). Column labels are as follows.
h: step size(s) used for making MCMC proposal;
AP: average acceptance probability; s/iter: average seconds per iteration; ESS(min,med,max): minimum, median, maximum of Effective Sample Size across all posterior coordinates; min(ESS)/s: minimum ESS per second;
spdup: speed-up relative to base pCN algorithm;
PDEsolns: number of PDE solutions during execution.} 
\label{tab:elliptic_snr10}
\end{table}

The transmissivity field is endowed with a log-Gaussian prior, i.e.
\begin{equation*}
k(s) = \exp(u(s)), \quad u(s) \sim \mathcal N(0, \Co)
\end{equation*}
where the covariance operator $\Co$ is defined through an exponential kernel function
\begin{equation*}
\Co: \Hi \rightarrow \Hi, \; u(s) \mapsto \int c(s, s') u(s') ds', \quad c(s, s') = \sigma_u^2 \exp \left( -\frac{\Vert s- s' \Vert}{2s_0}\right), \,\textrm{for}\; s,s' \in \Omega
\end{equation*}
with the prior standard deviation $\sigma_u=1.25$ and the correlation length $s_0=0.0625$ in the experiments.
To make the inverse problem more challenging, we follow \citep{cui16} to use a true transmissivity field $k_0(s)$ that is not drawn from the prior,
as shown on the left panel of Figure \ref{fig:truth_obs}.
The right panel of Figure \ref{fig:force_soln} shows the potential function $p(s)$, solved
with $k_0(s)$, which is also used for generating noisy observations.
Partial observations are obtained by solving $p(s)$ on an $80\times 80$ mesh and then collecting at $25$ measurement sensors as shown by the circles on the right panel of Figure \ref{fig:truth_obs}. 
The corresponding observation operator $\mathcal O$ yields the data
\begin{equation*}
y = \mathcal O p(s) + \eta, \quad \eta \sim \mathcal N(0, \sigma_\eta^2 I_{25})
\end{equation*}
Define the signal-to-noise ratio (SNR) as $\max_s \{u(s)\}/\sigma_\eta$.
We consider $\textrm{SNR}=10$ and $\textrm{SNR}=100$, data in the latter case containing more information than those in the former case.

The inverse problem involves sampling from the posterior of the log-transmissivity field $u(s)$, which becomes a vector of dimension over $6000$ after being discretized on $40\times 40$ mesh (with Lagrange degree $2$).
We compare the performance of algorithms including pCN, $\infty$-MALA, $\infty$-HMC, DR-$\infty$-mMALA, DR-$\infty$-mHMC, DILI, aDR-$\infty$-mMALA and aDR-$\infty$-mHMC \footnote{Implementation of full $\infty$-mMALA and $\infty$-mHMC is too intensive and therefore omitted.}. For each algorithm, we run $2500$ iterations and burn in the first $500$. For HMC algorithms, we let $I=4$.
For the location-dependent algorithms, we fix the dimension of local intrinsic subspaces at $r=5$. And for adaptive algorithms, we set the interval for updating global LIS $n_{\textrm{lag}}=200$,  
the interval for updating projected covariance $n_b=50$ (for DILI), and stop when either the F\"orstner diagnostic \citep{cui16} falls below the threshold $\Delta_{\textrm{LIS}}=10^{-5}$ or it reaches the maximum step $m_{\max}=100$. We tune the step sizes for each algorithm so that they have similar acceptance rate $60 \sim 70 \%$.

\begin{figure}[t]
\begin{center}
\includegraphics[width=1\textwidth,height=.4\textwidth]{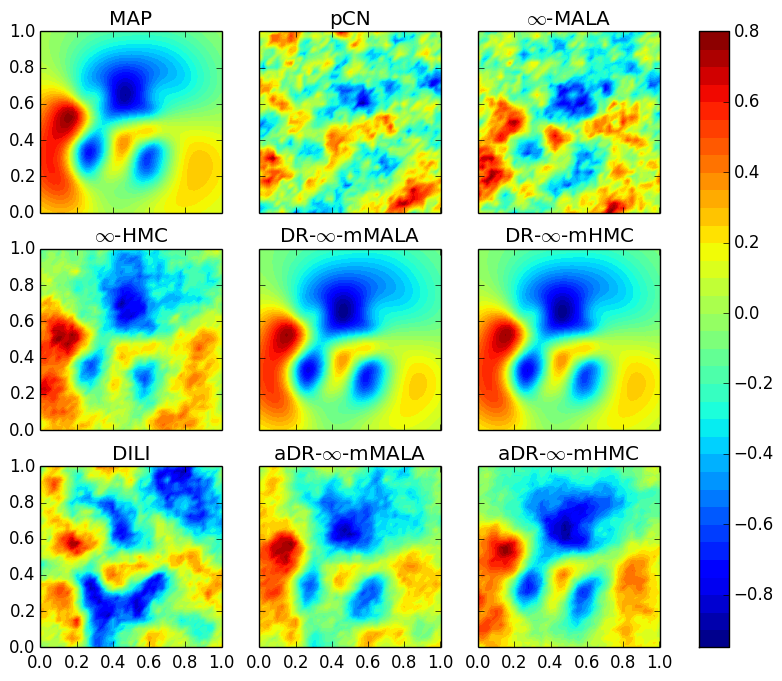}
\caption{Elliptic inverse problem ($\textrm{SNR}=100$): Bayesian posterior mean estimates of the log-transmissivity field $u(s)$ based on $2000$ samples by various MCMC algorithms; the upper-left corner shows the MAP estimate.}
\label{fig:est_snr100}
\end{center}
\end{figure}

% latex table generated in R 3.3.2 by xtable 1.8-2 package
% Sun Mar 26 22:39:30 2017
\begin{table}[htbp]
\centering
\begin{tabular}{l|ccccccc}
  \hline
Method & h & AP & s/iter & ESS(min,med,max) & minESS/s & spdup & PDEsolns \\ 
  \hline
pCN & 0.01 & 0.57 & 0.99 & (2.67,6.95,37.79) & 0.0013 & 1.00 & 2501 \\ 
  $\infty$-MALA & 0.04 & 0.61 & 1.62 & (4.32,15.34,51.45) & 0.0013 & 0.99 & 5002 \\ 
  $\infty$-HMC & 0.04 & 0.59 & 3.52 & (24.36,92.13,184.84) & 0.0035 & 2.57 & 12342 \\ 
  DR-$\infty$-mMALA & 0.52 & 0.67 & 8.85 & (127.25,210.84,460.07) & 0.0072 & 5.34 & 80032 \\ 
  DR-$\infty$-mHMC & 0.25 & 0.56 & 22.97 & (190.2,322.29,687.11) & 0.0041 & 3.08 & 198176 \\ 
  DILI & (0.1, 0.2) & 0.69 & 1.59 & (30.52,133.67,221.97) & 0.0096 & {\bf 7.13} & 6612 \\ 
  aDR-$\infty$-mMALA & 0.25 & 0.71 & 1.61 & (12.09,89.17,174.36) & 0.0037 & 2.79 & 6612 \\ 
  aDR-$\infty$-mHMC & 0.10 & 0.69 & 3.63 & (70.99,234.42,364.31) & 0.0098 & {\bf 7.26} & 14056 \\ 
   \hline
\end{tabular}
\caption{Sampling efficiency in the elliptic inverse problem (SNR=100). Column labels are as follows.
h: step size(s) used for making MCMC proposal;
AP: average acceptance probability; s/iter: average seconds per iteration; ESS(min,med,max): minimum, median, maximum of Effective Sample Size across all posterior coordinates; min(ESS)/s: minimum ESS per second;
spdup: speed-up relative to base pCN algorithm;
PDEsolns: number of PDE solutions during execution.} 
\label{tab:elliptic_snr100}
\end{table}

We first present the results for $\textrm{SNR}=10$.
Figure \ref{fig:est_snr10} shows the mean estimates of the log-transmissivity field $u(s)$ based on $2000$ posterior samples.
All MCMC algorithms generate estimates consistent with the maximum a posterior (MAP).
However, pCN gives the noisiest estimate because of the high auto-correlation in its samples;
two location-dependent algorithms, DR-$\infty$-mMALA and DR-$\infty$-mHMC, though taking longer time to finish, output the best results closest to the MAP estimate.
Note that among the three globally adaptive algorithms, DILI, aDR-$\infty$-mMALA and aDR-$\infty$-mHMC, they have similar estimates, with the last one being the best,
due to the superiority of multiple steps that can suppress the random walk behavior.

Quantitatively, their sampling efficiency, mainly measured by minimum effective sample size (ESS) per unit time, is compared in Table \ref{tab:elliptic_snr10}.
In general, HMC algorithms have larger ESS compared to their MALA  analogies, demonstrating the advantage of multi-step proposals.
The efficiency gain by local pre-conditioners in DR-$\infty$-mMALA and DR-$\infty$-mHMC, is undermined by the extra computational burden of low-rank approximation done for every MCMC iteration.
The globally adaptive versions, aDR-$\infty$-mMALA and aDR-$\infty$-mHMC, have higher ESS compared to other manifold algorithms, but with lower computational cost roughly equivalent to that of non-manifold algorithms (e.g. $\infty$-MALA and $\infty$-HMC respectively), thus achieve the highest sampling efficiency.
They are shown to be more efficient than DILI.

\begin{figure}[htbp]
  \begin{center}
     \includegraphics[width=1\textwidth,height=.35\textwidth]{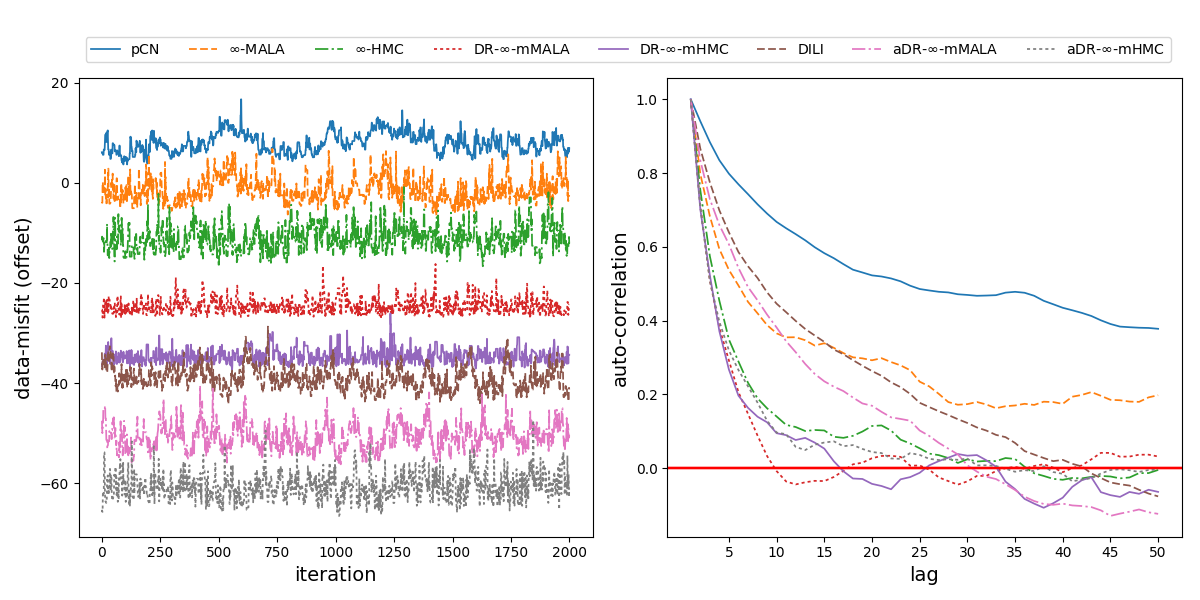}
  \end{center}
  \caption{Elliptic inverse problem ($\textrm{SNR}=100$): the trace plots of data-misfit function evaluated with each sample (left, values have been offset to be better compared with) and the auto-correlation of data-misfits as a function of lag (right).}
  \label{fig:misfit_acf_elliptic_snr100}
\end{figure}

\begin{figure}[t]
  \begin{center}
     \includegraphics[width=1\textwidth,height=.4\textwidth]{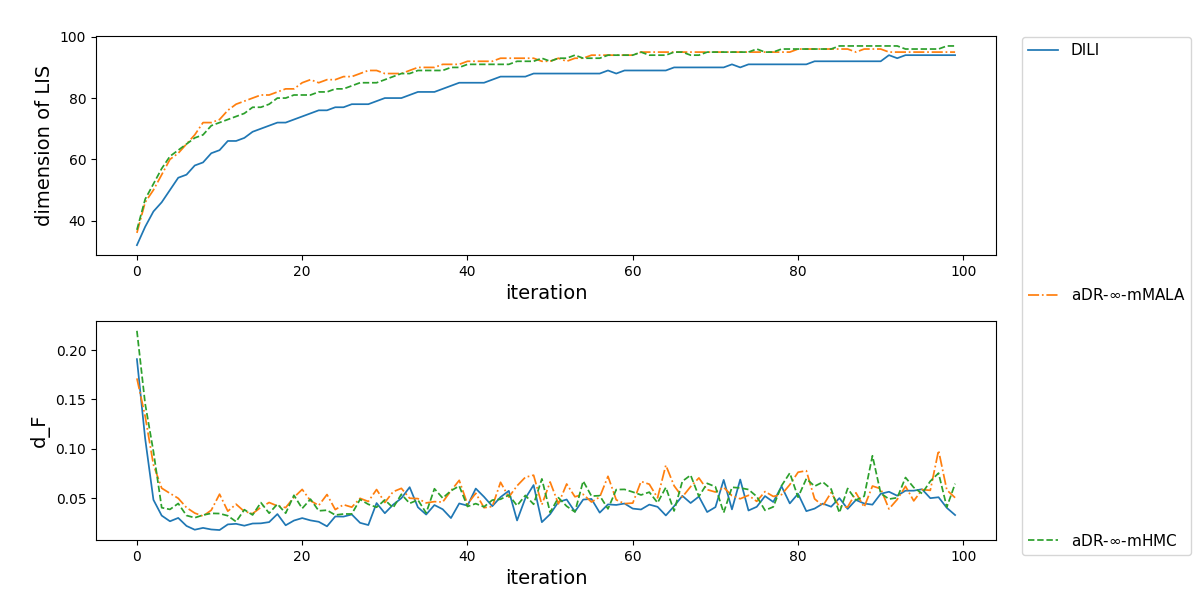}
  \end{center}
  \caption{The adaptation of the intrinsic subspace (LIS) in the elliptic inverse problem ($\textrm{SNR}=100$).
  Upper: growing dimensions of the global Likelihood Informed Subspace (LIS); Lower: evolving F\"orstner diagnostics $d_{\mathcal F}$.}
  \label{fig:LIS_updates_elliptic_snr100}
\end{figure}

Now we consider the case $\textrm{SNR}=100$.
In this case, the data contain more information since they are contaminated by noise of smaller magnitude.
Therefore, we set $n_{\textrm{lag}}=100$ in this case.
Figure \ref{fig:est_snr100} shows the mean estimates of the log-transmissivity field $u(s)$ based on $2000$ posterior samples.
The landscape of $u(s)$ is more complicated than the previous case. 
Again, two location-dependent algorithms, DR-$\infty$-mMALA and DR-$\infty$-mHMC, generate samples with the highest quality that yield results closest to the MAP estimate.
Note, estimates by both aDR-$\infty$-mMALA and aDR-$\infty$-mHMC are significantly better than that by DILI.

Table \ref{tab:elliptic_snr100} summarizes the sampling efficiency of all these algorithms.
Note in this case, aDR-$\infty$-mMALA's lower sampling efficiency compared to DILI may indicate that the intrinsic global LIS has rich geometric information that diagonal approximation to the projected $\Ko(v)$ as in \eqref{eq:lo-postcov} is not sufficient to support effective exploration;
however such insufficiency is remedied by multi-step transition movement such that aDR-$\infty$-mHMC could outperform DILI.

Figure \ref{fig:misfit_acf_elliptic_snr100} further illustrates the quality of samples from all the MCMC algorithms.
The left panel shows their data-misfit values \eqref{eq:gauss_nz}, which have been offset in order to better contrast their difference.
PCN is seen to have the most sticky trace-plot, indicating the highest auto-correlation in data-misfits; while those with large ESS have non-adhesive samples. This is verified by the low auto-correlations on the right panel. 
PCN has the highest auto-correlation at all lags, followed by $\infty$-MALA, aDR-$\infty$-mMALA and DILI with similar values.
In general `HMC' type algorithms have lower auto-correlation values compared with their `MALA' type counterparts.

\begin{figure}[t]
  \begin{center}
     \includegraphics[width=1\textwidth,height=.4\textwidth]{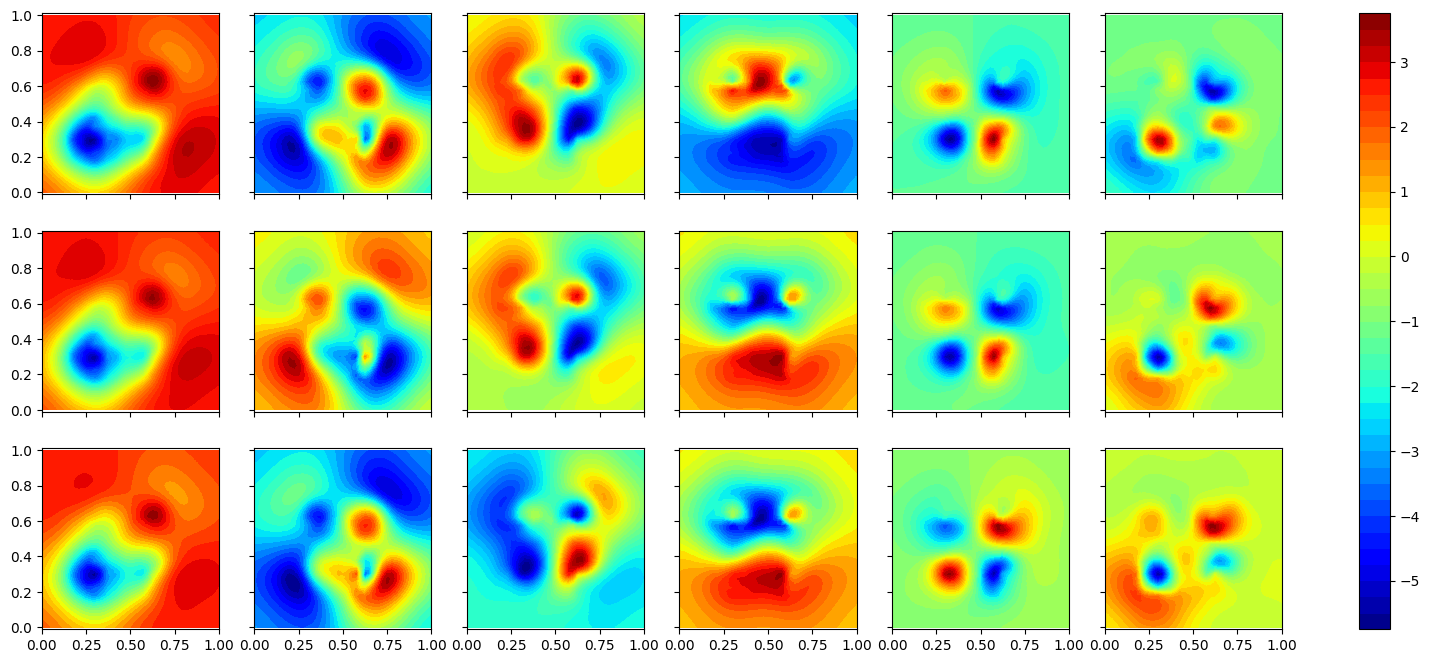}
  \end{center}
  \caption{Principal eigenfunctions projected to the global LIS in the elliptic inverse problem ($\textrm{SNR}=100$).
  Top: DILI; Middle: aDR-$\infty$-mMALA; Bottom: aDR-$\infty$-mHMC.}
  \label{fig:peigfs_updates_elliptic_snr100}
\end{figure}

To investigate the adaptation process of DILI, aDR-$\infty$-mMALA and aDR-$\infty$-mHMC, we run each for $10^4$ iterations and update the global LIS every $n_\textrm{lag}=100$ iterations.
Figure \ref{fig:LIS_updates_elliptic_snr100} shows the adaptation of the global LIS as a function of updating iteration.
They all have the similar trajectories of growing dimensions and F\"orstner diagnostics.
The final dimensions of the global LIS for DILI, aDR-$\infty$-mMALA and aDR-$\infty$-mHMC are $94, 95, 97$ respectively.
Figure \ref{fig:peigfs_updates_elliptic_snr100} demonstrates 6 principal eigen-functions of the global LIS identified by these adaptive algorithms and they are comparable to each other despite of a negative sign.

\subsection{RANS inverse problem}\label{sec:rans}

Now we study model inadequacy in the context of $k$-$\epsilon$ Reynolds-Averaged Navier-Stokes (RANS) model of turbulent jet.
RANS equations are time-averaged equations of motion for fluid flow, primarily used to describe turbulent flows \citep{reynolds1895,TENNEKES1992}.
Based on knowledge of the properties of flow turbulence, these equations can be used to give approximate time-averaged solutions to the Navier-Stokes equations.
We start by defining the problem. Consider the following incompressible Navier-Stokes equation with constant density:
\begin{equation}\label{eq:NS}
\begin{aligned}
\textrm{Continuity}:\quad &
\frac{\partial u_i}{\partial x_i} = 0\ , \\
\textrm{Momentum}:\quad &
\frac{\partial  u_i}{\partial t} + u_j\frac{\partial u_i}{\partial x_j} = -\frac{\partial p}{\partial x_i} + \nu \frac{\partial^2 u_i}{\partial x_j \partial x_j}\ ,  
\end{aligned}
\end{equation}
We take the average of the above system (denoting the average value as $U := \bar{u}$ and the residual as $u' = u - U$),
and close it by modeling the \emph{Reynolds-stress tensor}, $-\overline{u'_i u'_j}$
by means of the commonly used $k$-$\epsilon$ model.
This leads to the $k$-$\epsilon$ RANS equations as follows:
\begin{equation}
  \begin{aligned}
    \frac{\partial U_i}{\partial x_i} &= 0  \\
  \frac{\partial  U_i}{\partial t} 
  &= - U_j\frac{\partial U_i }{\partial x_j} -\frac{\partial P}{\partial x_i} + \frac{\partial}{\partial x_j}
  \left[ \nu \frac{\partial U_i}{\partial x_j} +C_\mu \frac{k^2}{\epsilon}\left(\frac{\partial U_i }{\partial x_j}
  +\frac{\partial U_j}{\partial x_i} \right) - \frac{2}{3}k\delta_{ij} \right] \\
  \frac{\partial k }{\partial t}   &= - U_j \frac{\partial k}{\partial x_j}
 + \left(    C_\mu \frac{k^2}{\epsilon}\left(\frac{\partial U_i }{\partial x_j}
  +\frac{\partial U_j}{\partial x_i} \right) - \frac{2}{3}k\delta_{ij} \right)\frac{\partial U_i}{\partial x_j} - \epsilon
  + \frac{\partial}{\partial x_j}
  \left[  \left(\nu + \frac{C_\mu k^2}{\sigma_k \epsilon}\right)\frac{\partial k}{\partial x_j}\right] \\
  \frac{\partial \epsilon }{\partial t}
    &= - U_j\frac{\partial \epsilon}{\partial x_j} + C_{\epsilon 1} \frac{\epsilon}{k} \left(    C_\mu \frac{k^2}{\epsilon}\left(\frac{\partial U_i }{\partial x_j}
  +\frac{\partial U_j}{\partial x_i} \right) - \frac{2}{3}k\delta_{ij} \right)\frac{\partial U_i}{\partial x_j}
  - C_{\epsilon 2}\frac{\epsilon^2}{k} + \frac{\partial}{\partial x_j}
  \left[  \left(\nu + \frac{C_\mu k^2}{\sigma_\epsilon \epsilon}\right) \frac{\partial \epsilon}{\partial x_j}\right]
  \end{aligned}
  \label{eq:ke_RANS}
\end{equation}
where $U_i$ and $P$ denote the averaged velocity and pressure; $k$
and $\varepsilon$ denote the turbulent kinetic energy and turbulent
dissipation; $\nu$ and $\nu_t = C_{\mu} \frac{k^2}{\varepsilon}$ are
the kinematic and turbulent viscosity; and $C_\mu$, $\sigma_k$,
$\sigma_\epsilon$, $C_{\epsilon 1}$, $C_{\epsilon 2}$ are the
empirical constants involved in the $k$-$\varepsilon$ closure
model. These PDEs are augmented by appropriate inflow and outflow
boundary conditions detailed in \ref{apx:DNS}. These conditions are imposed
with mollified operators to enforce positivity of
($k,\varepsilon$) and switch to a Dirichlet boundary condition when an
inflow is detected on the outflow boundary.

\begin{figure}[t]
  \centering
  \includegraphics[width=\textwidth,height=.35\textwidth]{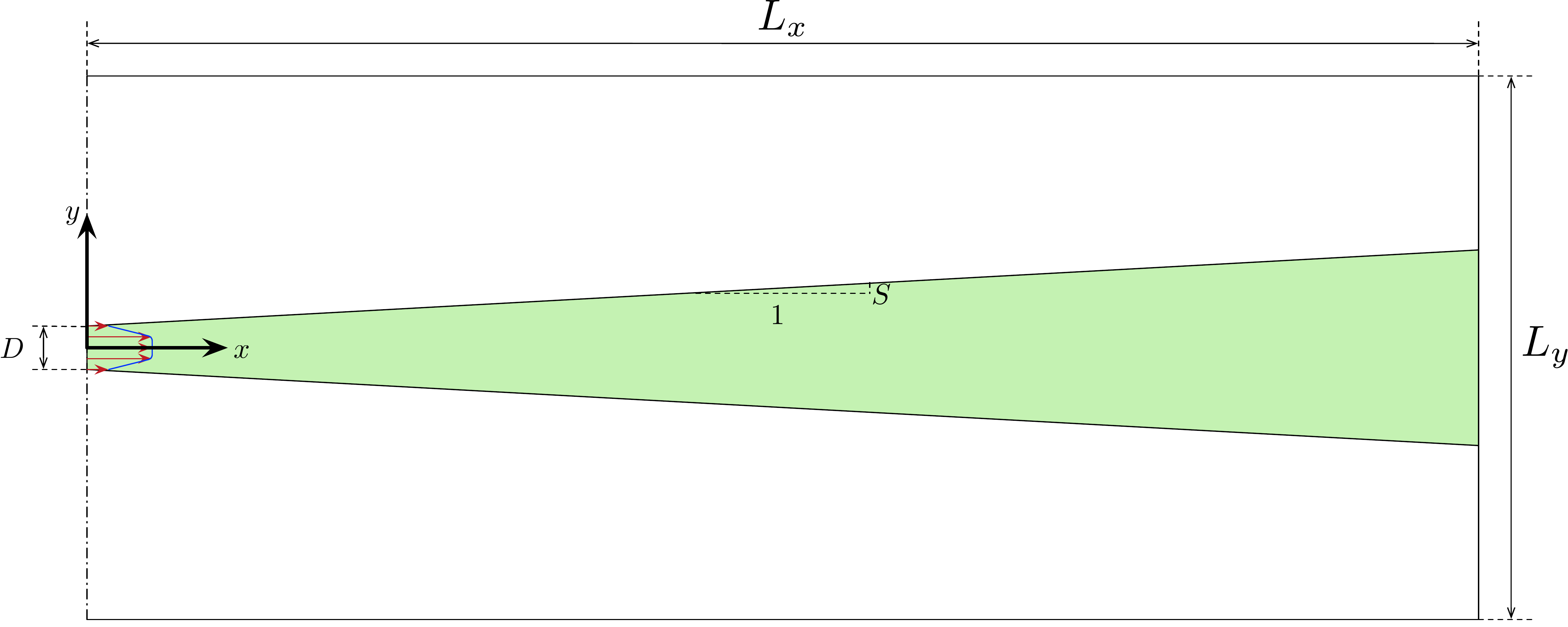}
  \caption{Geometry of non-reacting 2D jet flow simulation}
  \label{fig:2d_jet_flow}
\end{figure}

The $k$-$\epsilon$ RANS model \eqref{eq:ke_RANS} serves as a cheaper approximation
to the Navier-Stokes equation \eqref{eq:NS} whose high resolution solutions by direct 
numerical simulation (DNS) may be prohibitively expensive to obtain \citep{klein03}.
However the model may be inherently unable to characterize certain phenomena present
in the fully resolved Navier-Stokes equation. To understand how and when this
occurs, model inadequacy is represented by replacing the constant $C_\mu$ with 
the field $e^m C_\mu$ ($m$ an unknown field to be determined). We take a Bayesian
approach to calibrate $m$ to high resolution DNS data obtained in \cite{klein03}.
Figure \ref{fig:2d_jet_flow} shows the simulation domain and Figure \ref{fig:DNS_VIS} 
shows the results from DNS data.
One can refer \ref{apx:DNS} for more simulation details.

We focus on the upper half domain in Figure \ref{fig:2d_jet_flow}.
A damped Newton method is employed to solve the nonlinear system of equations \eqref{eq:ke_RANS} on a $40\times 80$ finite element mesh till its steady-state.
Pseudo-time continuation is used to guarantee the global convergence to a physically stable solution.
The uncertainty field $m$ is represented a priori by a Gaussian random field, 
and the likelihood function of $m$ is derived by fitting the DNS data to the forward 
solution of the $k$-$\epsilon$ RANS equation \eqref{eq:ke_RANS} with $C_\mu$ replaced by $e^m C_\mu$.
The inverse problem involves sampling the posterior of uncertainty field $m$ given
DNS data.
The field $m$ becomes a parameter vector of over $3000$ dimensions after discretization. Therefore the inference about it is computationally challenging.

\begin{figure}[t]
\begin{center}
\includegraphics[width=1\textwidth,height=.4\textwidth]{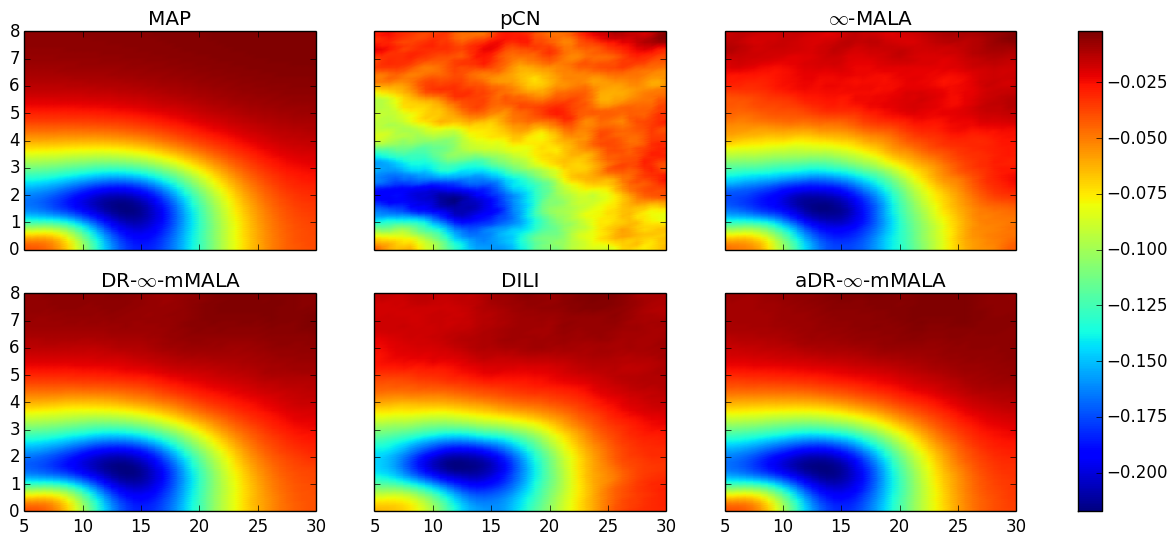}
\caption{Inverse RANS problem: Bayesian posterior mean estimates of the log-uncertainty field $m(x)$ based on $2000$ samples by various MCMC algorithms; the upper-left corner shows the MAP estimate.}
\label{fig:est_rans_l25w8_2000}
\end{center}
\end{figure}

\texttt{FEniCS} \citep{LoggMardalEtAl2012a,AlnaesBlechta2015a} codes are developed and \texttt{hIPPYlib} \citep{villa2018} library is adopted by EQUiPS team led by the University of Texas, Austin. 
Those codes are adapted to run MCMC algorithms including pCN, $\infty$-MALA, DR-$\infty$-mMALA, DILI and aDR-$\infty$-mMALA. Due to the extremely intensive computation required for repeated forward solving, we only focus on `MALA' type algorithms in this example.
Figure \ref{fig:est_rans_l25w8_2000} shows the point estimates including MAP based on optimization and posterior means of $2000$ MCMC samples after burning the first $500$.
All the MCMC results are consistent with MAP, though pCN and $\infty$-MALA give more noisy estimates than others.
Table \ref{tab:rans_l25w8_2000} compares the sampling efficiency, measured by minimal ESS per second, among these five algorithms.
Note, our proposed dimension reduced algorithms, DR-$\infty$-mMALA and aDR-$\infty$-mMALA, have achieved much more speed up compared to DILI.
These results demonstrate the benefit of geometric information and success of dimension reduction in accelerating MCMC in this highly challenging model inadequacy problem.
Based on the estimates, we see that there is considerably more model error
($m$ far from zero) in the tail region (between $10$ and $20$ in $x$) of the flame, when using the $k$-$\epsilon$ RANS model \eqref{eq:ke_RANS} to approximate the Navier-Stokes equation \eqref{eq:NS}.

%% latex table generated in R 3.5.0 by xtable 1.8-2 package
%% Mon Jun 18 14:21:30 2018
%\begin{table}[ht]
%\centering
%\begin{tabular}{l|ccccccc}
%  \hline
%Method & h & AP & s/iter & ESS(min,med,max) & minESS/s & spdup & PDEsolns \\ 
%  \hline
%pCN & 0.02 & 0.70 & 32.60 & (3.6,10.87,53.81) & 0.0001 & 1.00 & 2501 \\ 
%  $\infty$-MALA & 0.30 & 0.73 & 69.67 & (35.57,117.43,561.35) & 0.0003 & 4.62 & 5002 \\ 
%  DR-$\infty$-mMALA & 4.50 & 0.92 & 179.43 & (541.17,1839.34,2000) & 0.0015 & {\bf 27.31} & 305122 \\ 
%  DILI & (0.15, 0.25) & 0.85 & 75.72 & (34.69,208.7,324.14) & 0.0002 & 4.15 & 6664 \\ 
%  aDR-$\infty$-mMALA & 3.00 & 0.93 & 114.34 & (639.55,1306.46,1642.22) & 0.0028 & {\bf 50.65} & 6664 \\ 
%   \hline
%\end{tabular}
%\caption{Sampling efficiency in the inverse RANS problem. Column labels are as follows.
%h: step size(s) used for making MCMC proposal;
%AP: average acceptance probability; s/iter: average seconds per iteration; ESS(min,med,max): minimum, median, maximum of Effective Sample Size across all posterior coordinates; min(ESS)/s: minimum ESS per second;
%spdup: speed-up relative to base pCN algorithm;
%PDEsolns: number of PDE solutions during execution.} 
%\label{tab:rans_l25w8_2000}
%\end{table}

% latex table generated in R 3.5.0 by xtable 1.8-2 package
% Sat Jul 14 19:21:07 2018
\begin{table}[ht]
\centering
\begin{tabular}{l|ccccccc}
  \hline
Method & h & AP & s/iter & ESS(min,med,max) & minESS/s & spdup & PDEsolns \\ 
  \hline
pCN & 0.01 & 0.75 & 31.32 & (3.07,9.01,34.66) & 4.91e-05 & 1.00 & 2501 \\ 
  $\infty$-MALA & 0.32 & 0.72 & 70.17 & (35.24,131.51,360.09) & 2.51e-04 & 5.12 & 5002 \\ 
  DR-$\infty$-mMALA & 28.66 & 0.69 & 163.51 & (780.77,1934.9,2000) & 2.39e-03 & {\bf 48.67} & 305122 \\ 
  DILI & (0.24,0.40) & 0.77 & 79.32 & (73.75,288.54,411.71) & 4.65e-04 & 9.48 & 6664 \\ 
  aDR-$\infty$-mMALA & 6.00 & 0.87 & 123.69 & (941.78,2000,2000) & 3.81e-03 & {\bf 77.61} & 6664 \\ 
   \hline
\end{tabular}
\caption{Sampling efficiency in the inverse RANS problem. Column labels are as follows.
h: step size(s) used for making MCMC proposal;
AP: average acceptance probability; s/iter: average seconds per iteration; ESS(min,med,max): minimum, median, maximum of Effective Sample Size across all posterior coordinates; min(ESS)/s: minimum ESS per second;
spdup: speed-up relative to base pCN algorithm;
PDEsolns: number of PDE solutions during execution.} 
\label{tab:rans_l25w8_2000}
\end{table}

\begin{figure}[t]
  \begin{center}
     \includegraphics[width=1\textwidth,height=.35\textwidth]{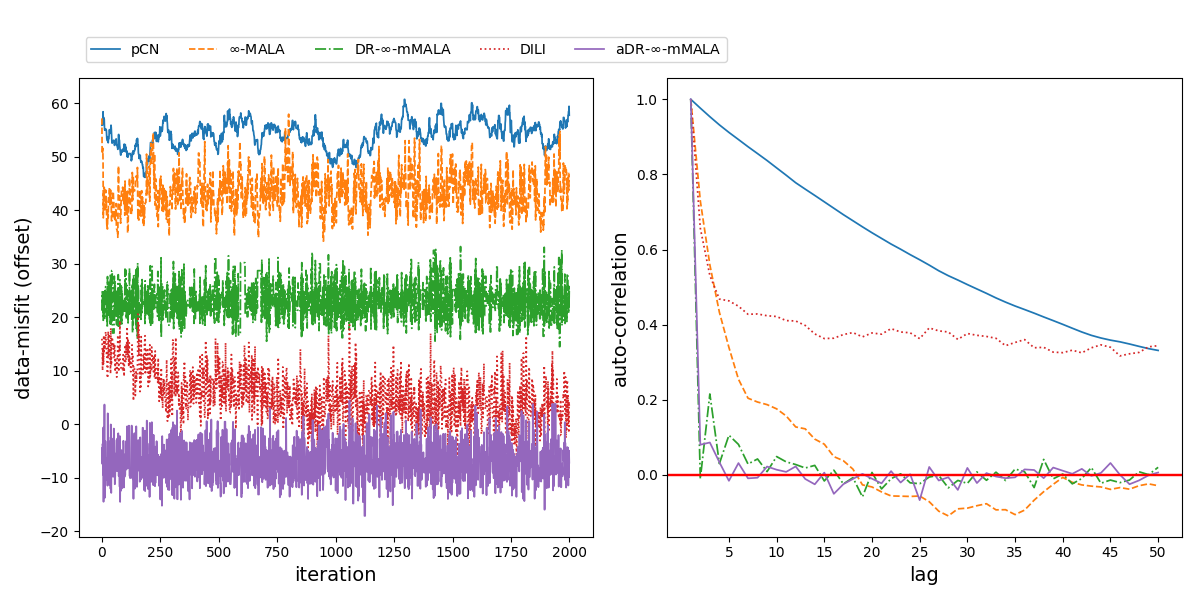}
  \end{center}
  \caption{Inverse RANS problem: the trace plots of data-misfit function evaluated with each sample (left, values have been offset to be better compared with) and the auto-correlation of data-misfits as a function of lag (right).}
  \label{fig:misfit_acf_rans_l25w8_2000}
\end{figure}

Figure \ref{fig:misfit_acf_rans_l25w8_2000} further illustrates the quality of samples from all these MCMC algorithms.
The left panel shows the offset data-misfit function \eqref{eq:gauss_nz} evaluations, and the right panel plots their auto-correlation functions of lag.
The results can be divided into three groups: pCN is the worst with sticky trace-plot and highest auto-correlation; then DILI behaves similarly as $\infty$-MALA, though they differ in auto-correlation after lag 5; 
while DR-$\infty$-mMALA and aDR-$\infty$-mMALA perform the best in efficiently producing posterior samples with small auto-correlations.

\begin{figure}[htbp]
  \begin{center}
     \includegraphics[width=1\textwidth,height=.5\textwidth]{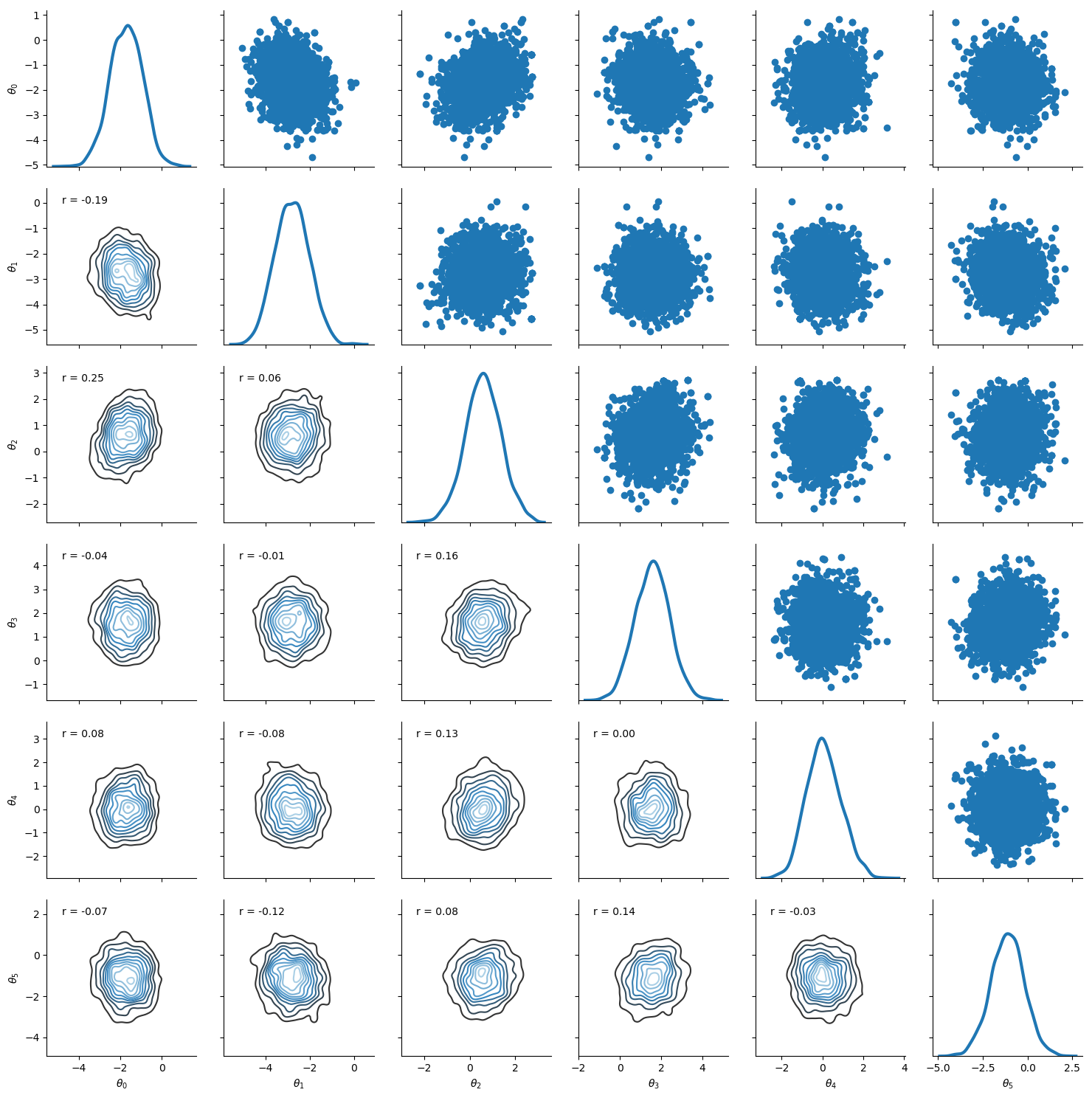}
  \end{center}
  \caption{Inverse RANS problem: the pairwise density contours of $\Vo^*_6 v$ in the projected global LIS $\Hi_6$. $(\theta_1,\cdots,\theta_6):=\Vo^*_6 v$.}
  \label{fig:pdc_rans_l25w8_2000}
\end{figure}

As mentioned in the previous context, the intrinsic subspace has rich geometric information concentrated on low-dimensions.
Figure \ref{fig:pdc_rans_l25w8_2000} plots the pairwise posterior density contours of $\Vo^*_6 v$ in the projected global LIS $\Hi_6$ of 6 principal dimensions.
One can observe that they substantially deviate from Gaussian distributions.

%%%%%%%%%%%%%%%%%%%%%%%%%%%%%%%%%%%%%%%%%%

\section{Conclusion and Discussion}\label{sec:conclusion}

In this paper, we accelerate $\infty$-GMC \citep{beskos2017} with dimension reduction.
The dimension reduction techniques we propose are directly based on the low-rank approximation of the (prior or Gaussian-approximate posterior) covariance operator.
Randomized algorithms \citep{halko11,saibaba16,liberty07} are used to obtain principal eigen-directions, which form the basis of an intrinsic low-dimensional subspace.
Geometry-informed algorithms are applied to the intrinsic subspace to effectively probe its complex structure; while the geometry-flat complementary subspace can be efficiently explored using simpler methods.
We develop location-dependent implementation (DR-$\infty$-mMALA and DR-$\infty$-mHMC) and globally adaptive implementation (aDR-$\infty$-mMALA and aDR-$\infty$-mHMC).
ADR-$\infty$-mMALA is analogous to DILI \citep{cui16}, but different in the low-rank approximation as well as the acceptance probability.
Interesting connections are established between them but our proposed methods are shown to be simpler and more efficient. ADR-$\infty$-mHMC goes beyond DILI to make multi-step proposals that can suppress the random walk behavior.
We also compare various dimension-independent MCMC proposals and quantify their difference with upper bounds. They are presented to predict their asymptotic behavior as the dimension of intrinsic subspace increases.
Numerical evidence, including an elliptic inverse problem and an inverse problem involving RANS equations, supports the computationally advantage over the state-of-the-art counterparts.
%In conclusion, we have proposed algorithmically simpler, but computationally comparable, or more efficient MCMC algorithms compared to DILI.

One interesting future direction could be a combination of sequential Monte Carlo (SMC) \citep{del2006sequential} and the proposed dimension reduced $\infty$-GMC (DR-$\infty$-GMC) to parallelize MCMC.
SMC samplers deal with sequentially sampling from 
the distributions $\{\pi_n\}_{n\in\mathbb T}$ that are defined on a common measurable space $(E, \mathcal E)$.
%In our application, we are interested in the following geometric scheme \citep{gelman1998simulating,neal01} of sequential distributions
%\begin{equation}
%\pi_n(x) \propto \pi(x)^{\phi_n} \pi_0(x)^{1-\phi_n} \propto f(x)^{\phi_n} \pi_0(x), \quad 0\leq \phi_1<\cdots<\phi_p=1
%\end{equation}
%where $f(x)$ is the likelihood, $\pi_0(x)$ and $\pi(x)$ are the prior and the posterior respectively.
SMC builds on the sequential importance sampling (SIS), 
%SIS is based on the identities
%\begin{equation}
%\mathbb E_{\pi_n} (\varphi) = Z_n^{-1} \int_E \varphi(x) w_n(x) \eta_n(x) dx, \quad Z_n = \int_E w_n(x) \eta_n(x) dx
%\end{equation}
%where $\pi_n(x) =\gamma_n(x)/Z_n$, $w_n(x) = \gamma_n(x)/\eta_n(x)$, $\gamma_n(x)$ is known pointwise
%and $\eta_n(x)$ is the sampling distribution.
%SIS consider moving 
which moves the particles $X_{n-1}^{(i)} \sim \eta_{n-1}$ according to the Markov kernel
$K_n: E\times \mathcal E\rightarrow [0,1]$:
\begin{equation*}
X_n^{(i)} \sim \eta_n(x') = \int_E \eta_{n-1} (x) K_n(x,x') dx \ .
\end{equation*}
\cite{beskos2015sequential} uses pCN for $K_n$ in the SIS framework and has demonstrated orders of magnitude speed up of pCN by combining it with SMC in this way.
%We are working on 
Substituting $K_n$ with DR-$\infty$-GMC Markov kernel into the SMC scheme can parallelize these MCMC algorithms
%Forward models feed geometric information including the log-likelihood, its gradient and metric into the infinite-dimensional geometric MCMC algorithms,
%which in turn provide particle moves in the SMC method.
%The goal is 
to achieve further efficiency improvement to tackle Bayesian inverse problems at larger scale.
%The performance of such geometric MCMC-SMC scheme depends on many factors, e.g. the sequential scheme $\{\phi_n\}$, and the sequence of Markov kernels $\{K_n\}$,
%which usually consist of more Markov transitions as $\phi_n \rightarrow 1$.
%In the extreme case $\{\phi_n\}\equiv \{1\}$, the geometric MCMC-SMC scheme is reduced to, and thereafter expected to be at least as efficient as, (multiple) infinite-dimensional geometric MCMC. 

DR-$\infty$-mHMC can be improved further by
e.g., surrogate methods \citep{lan16,zhang2017b} or grid methods \citep{zhang2017a} to reduce the burden of point-wise updating gradient and metric.
Within the leap-frog steps of `HMC' type algorithms, one can consider `BFGS' type update as in quasi-Newton methods \citep{zhang11}.

The methods proposed in this paper work well for inverse problems with Gaussian priors. 
Recently, there is a growing interest in flexible modeling with non-Gaussian priors \citep{chen2018,hosseini2018,zahm2018}, including Besov priors, level-set priors, deep Gaussian priors, Bessel-K prior, etc..
They are employed to induce sparse MAP estimators, to do graph-based classification, or to represent non-Gaussian phenomena.
\cite{chen2018} represent typical non-Gaussian priors as a hierarchy of conditional Gaussian priors using whitening transformations and convert problems to Gaussian prior based sampling (pCN).
It will be useful %to relax the Gaussian prior assumption and 
to develop geometry-aided dimension robust algorithms for inverse problems with non-Gaussian priors.

\section*{Acknowledgement}
SL was supported by the DARPA funded program Enabling Quantification of Uncertainty in Physical Systems (EQUiPS), contract W911NF-15-2-0121 when the paper was written.
We thank EQUiPS team for sharing the FEniCS codes for solving $k$-$\eps$ RANS equations (for both forward and adjoint problems), and especially Umberto Villa at the University of Texas-Austin for numerous help.
We also thank the anonymous reviewers for the constructive comments that help to improve the manuscript.

%\newpage
%%%%%%%%%%%%%%%%%%%%%%%%%%%%%%%%%%%%%%%%%%
\def\urlprefix{}
\def\url#1{\vspace{-1em}}
%\section*{References}
\bibliography{references}

%%%%%%%%%%%%%%%%%%%%%%%%%%%%%%%%%%%%%%%%%%
\newpage
\begin{center}
{\Large \bf Appendix: PDE setting and Proofs}
\end{center}
\appendix

\section{Direct Numerical Simulation details of RANS \citep[chap. 3 of][]{equips2016}}\label{apx:DNS}

\subsection{$k$-$\epsilon$ RANS equations}
Taking averages $\bar{\cdot}$ on  \eqref{eq:NS} gives
\begin{equation}
    \frac{\partial U_i}{\partial x_i} = 0 
  \label{eq:avg_continuity}
\end{equation}
\begin{equation}
  \frac{\partial  U_i}{\partial t} + U_j\frac{\partial U_i }{\partial x_j}
  = -\frac{\partial P}{\partial x_i} + \frac{\partial}{\partial x_j}
  \left( \nu \frac{\partial U_i}{\partial x_j} - \overline{u'_i u'_j} \right)
  \label{eq:avg_momentum}
\end{equation}
where Equation \eqref{eq:avg_momentum} is referred to as the
Reynolds-Averaged Navier-Stokes (RANS) equation.
We close the above system by modeling the \emph{Reynolds-stress tensor}, $-\overline{u'_i u'_j}$
by means of the $k$-$\epsilon$ model. 
\begin{equation}
  \nu_t = C_\mu \frac{k^2}{\epsilon}
  \label{eq:def_nu_t}
\end{equation}
\begin{equation}
  \mean{u'_i u'_j} = -\nu_t\left(\frac{\partial U_i }{\partial x_j}
  +\frac{\partial U_j}{\partial x_i} \right) + \frac{2}{3}k\delta_{ij}
  \label{eq:model_rey_stress}
\end{equation}
\begin{equation}
  \frac{\partial k }{\partial t} + U_j \frac{\partial k}{\partial x_j}
  = - \mean{ u'_i u'_j}\frac{\partial U_i}{\partial x_j} - \epsilon
  + \frac{\partial}{\partial x_j}
  \left[  \left(\nu + \frac{\nu_t}{\sigma_k}\right)\frac{\partial k}{\partial x_j}\right]
  \label{eq:k_eq}
\end{equation}
\begin{equation}
  \frac{\partial \epsilon }{\partial t}
  + U_j\frac{\partial \epsilon}{\partial x_j}
  = - C_{\epsilon 1} \frac{\epsilon}{k} \mean{u'_i u'_j}\frac{\partial U_i}{\partial x_j}
  - C_{\epsilon 2}\frac{\epsilon^2}{k} + \frac{\partial}{\partial x_j}
  \left[  \left(\nu + \frac{\nu_t}{\sigma_\epsilon}\right) \frac{\partial \epsilon}{\partial x_j}\right]
  \label{eq:e_eq}
\end{equation}
where the commonly used coefficients are 
\begin{equation*}
    C_\mu = 0.09, \quad
    \sigma_k = 1.00, \quad
    \sigma_\epsilon = 1.30, \quad
    C_{\epsilon 1} = 1.44, \quad
    C_{\epsilon 2} = 1.92 
\end{equation*}

Substituting \eqref{eq:def_nu_t} and \eqref{eq:model_rey_stress} into
\eqref{eq:avg_momentum}, \eqref{eq:k_eq} and \eqref{eq:e_eq} leads to the $k$-$\epsilon$ RANS equation \eqref{eq:ke_RANS}.
The following convention is used when there is no confusion:
\begin{equation*}
u=(u_1,u_2)=(u,v); \qquad U=\mean{u}, V=\mean{v}=\mean{u_2}
\end{equation*}

\subsection{Geometry}
The geometry of two dimensional non-reacting jet flow is shown Figure \ref{fig:2d_jet_flow}.
The simulation domain size is chosen to be $L_x = 20D$ and $L_y = 8D$ where $D$ is nozzle width of inlet flow.
The mean velocity profile and turbulent kinetic energy from DNS data is shown in Figure \ref{fig:DNS_VIS}.
We here follow simulation geometry and boundary conditions of the work by \cite{klein03}.

\begin{figure}[t]
    \hspace{-15pt}
    \begin{subfigure}[b]{0.32\textwidth}
        \includegraphics[width=1.2\textwidth,height=1\textwidth]{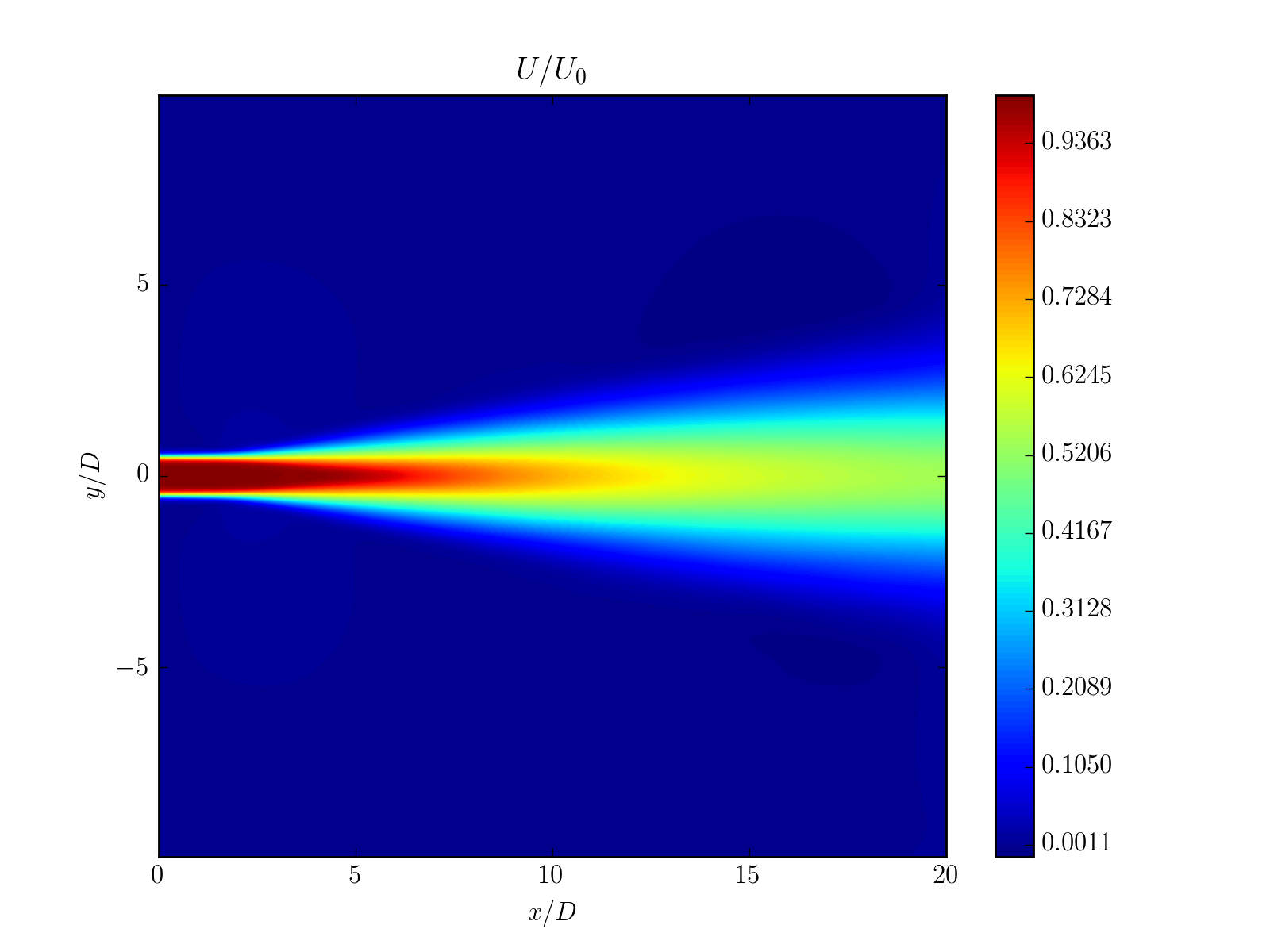}
        \caption{$U=\mean{u_1}$}
        \label{fig:DNS_U}
    \end{subfigure}
    ~ %add desired spacing between images, e. g. ~, \quad, \qquad, \hfill etc. 
      %(or a blank line to force the subfigure onto a new line)
    \begin{subfigure}[b]{0.32\textwidth}
        \includegraphics[width=1.2\textwidth,height=1\textwidth]{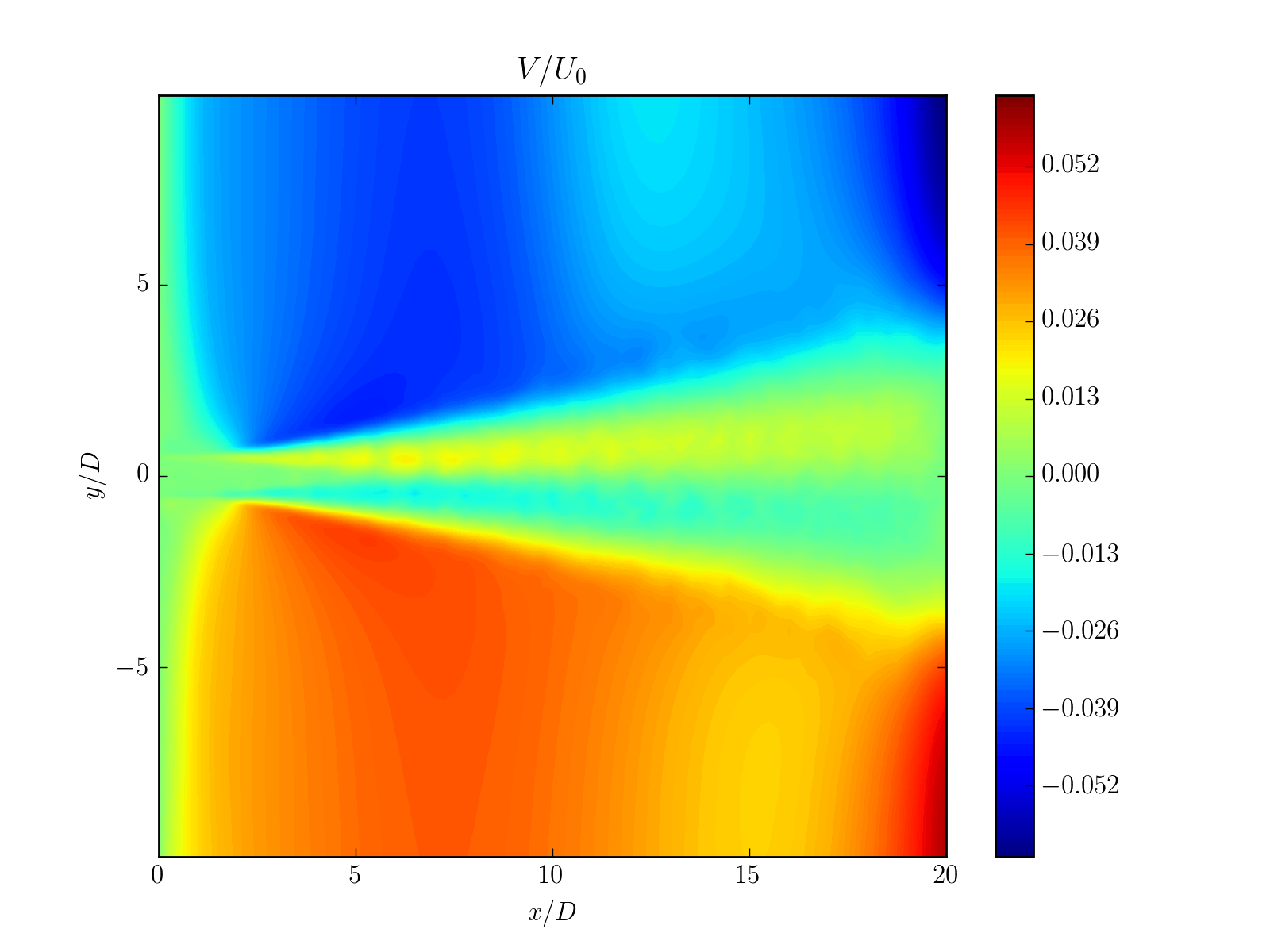}
        \caption{$V=\mean{u_2}$}
        \label{fig:DNS_V}
    \end{subfigure}
    ~ %add desired spacing between images, e. g. ~, \quad, \qquad, \hfill etc. 
    %(or a blank line to force the subfigure onto a new line)
    \begin{subfigure}[b]{0.32\textwidth}
        \includegraphics[width=1.2\textwidth,height=1\textwidth]{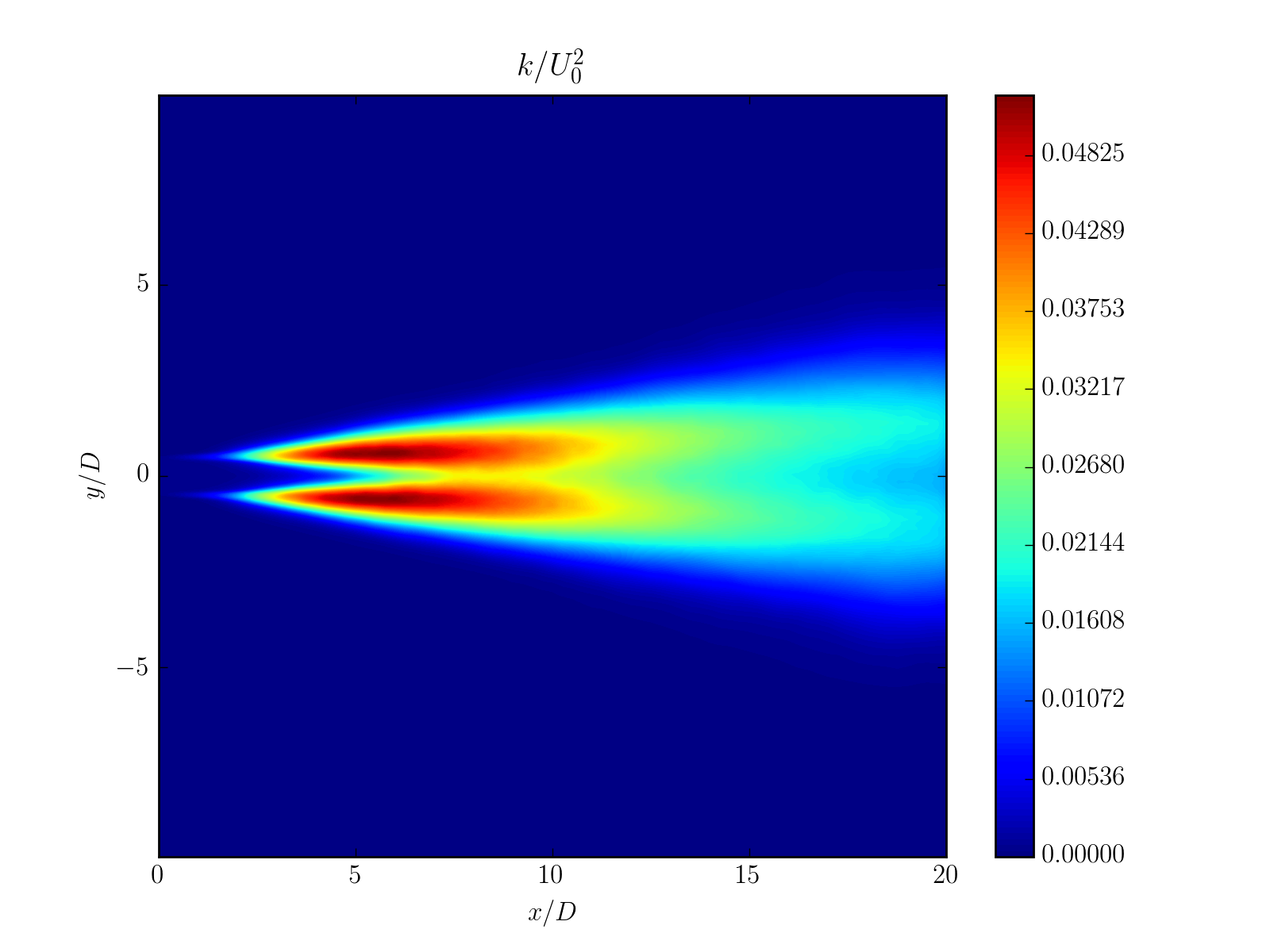}
        \caption{$k=\mean{u_i' u_i'}/2$}
        \label{fig:DNS_k}
    \end{subfigure}
  \caption{Mean velocity profile \eqref{fig:DNS_U}\eqref{fig:DNS_V} and turbulent kinetic energy \eqref{fig:DNS_k} from DNS data by \cite{klein03}}
  \label{fig:DNS_VIS}
\end{figure}

\subsection{Boundary condition}
\begin{itemize}
\item $x=0$
  \begin{equation*}
    \begin{aligned}
      U_0 &= \half + \half \tanh \left(\frac{-|y| + 0.5D}{2\theta}\right) \\
      V&=0 \\
          k &= \half \mean{u_i' u_i'} \\
          \mean{u'u'} &= \mean{v'v'} = 0.0004 \quad \textrm{at} \quad x=0, \;\; |y| < D/2 \\
          \mean{u'u'} &= \mean{v'v'} = 0 \quad \textrm{otherwise} \\
          \epsilon &= \frac{C_\mu}{\nu_t} k^2
    \end{aligned}
  \end{equation*}
  where $\theta=D/20$ is the momentum thickness. %For now, $\theta  = D/20$ following Ribault et al.

\item $y = \pm L_y / 2$
  \begin{equation*}
    \begin{aligned}
      (\nu_t + \nu)\left(\pdiff{U}{y}+\pdiff{V}{x}\right) &= 0 \\
      V &= 0 \\
      k &= 0 \\
      \epsilon &= 0 \\ 
    \end{aligned}
  \end{equation*}

\item $x = L_x$
  \begin{equation*}
    \begin{aligned}
      2(\nu_t + \nu)\pdiff{U}{x} -P &= 0 \\
      V &= 0 \\
      \left(\frac{\nu_t}{\sigma_k} + \nu\right)\pdiff{k}{x} &= 0 \\
      \left(\frac{\nu_t}{\sigma_\epsilon} + \nu\right)\pdiff{\epsilon}{x} &= 0 
    \end{aligned}
  \end{equation*}

\end{itemize}

%There are a few issues with above inlet boundary conditions. First,
%$\epsilon$ of inlet boundary condition for the original DNS data is
%not clearly described.  Second, the $k$ ($=(\mean{u'u'} +
%\mean{v'v'})/2$) is too weak. Alternative way is using DNS data for inlet condition.
%If DNS data for $k$ and $\epsilon$ are available as well as mean velocity profile,
%we can define a new inlet boundary which is offset from the original inlet boundary. 

\subsection{Initial condition}
Since only solutions at steady state are important, the final solution should not be sensitive to the
initial condition. However, the ``good'' initial conditions can reduce the computing time and avoid
any possible ill-poseness of simulation. One of the best initial condition is from DNS data. Alternatively, one can
make a good approximation on initial condition.
Following suggestion for the initial conditions are
not so sophisticated and based on simplification of discussion by \cite{pope2000}.

Let's assume the mean velocity is self-similar to centerline velocity and characteristic integral length scale $L$.
Now, the spread function is defined as
\begin{equation*}
S = \frac{\diff L }{\diff x} 
\end{equation*}
From the mass flux constant condition
following initial mean velocity profile can be obtained.

\begin{equation*}
  U(x,y,t=0) = \frac{D}{ D+Sx} \left[ \half + \half \tanh \left( \frac{-|y| + 0.5(D+Sx)}{(D+Sx)/10} \right) \right]
\end{equation*}
Similarly, the initial condition of fluctuation terms have been set as
\begin{equation*}
  \begin{aligned}
    \mean{u'u'} = \mean{v'v'} &=  \frac{0.0004D}{D+Sx}  &\quad \textrm{at} \quad |y| < (D+Sx)/2 \\
    \mean{u'u'} = \mean{v'v'} &= 0 &\quad \textrm{otherwise}
  \end{aligned}
\end{equation*}

Let's assume that $S = 0.06$.
The dissipation term, $\epsilon$ can be set with the assumption of initial turbulent kinetic viscosity.
\begin{equation*}
  \nu_t  (x,y,t=0) = 0.02
\end{equation*}
Hence,
\begin{equation*}
  \epsilon = \frac{C_\mu}{\nu_t} k^2
\end{equation*}

\section{Proof of Theorem \ref{thm:DR-infmMALA}}\label{apx:thm-DR-infmMALA}
\begin{proof}
With assumptions \ref{asump:eig_decay} and \ref{asump:fwd_diff} (same as Assumption 3.1 of \cite{beskos2017}), we have $\nabla_v\Phi(v)=\Co^{\half} \nabla_u\Phi(u) \in \Hi^{\kappa-\ell}$ for some $\ell \in[0,\kappa-\half)$, $\kappa>\half$.
With the approximations \eqref{eq:apx_postcov} \eqref{eq:apx_postcov2}, we directly get the assumption 3.2 of \cite{beskos2017}:
$\aKo(v)=\Vo_r D_r \Vo_r^* + \Vo_\perp\Vo_\perp^*$ is self-adjoint and positive-definite operator on $\Hi$ and it is such that
\begin{itemize}
\item[i)] $\Imag(\aKo(v)^\half) = \Imag(\Io) = \Hi$\ ;
\item[ii)] $\aKo(v)^\half (\aKo(v)^\half)^* - \Io = \Vo_r ( D_r -I_r ) \Vo_r^*$ is an operator on $\Hi$ with finite rank thus automatically Hilbert-Schmidt.
\end{itemize}
Assumption 3.3 of \cite{beskos2017} can be derived by the fact that $(\aKo(v)-\Io) v=\Vo_r ( D_r -I_r ) \Vo_r^* v \in \Hi_r \subset \Hi$, which, together with assumptions \ref{asump:eig_decay} and \ref{asump:fwd_diff}, yield a similar corollary as 3.4 of \cite{beskos2017}: $\hat g(v) \in \Hi$.
Define the following bivariate law and its reference measure
\begin{equation*}
\begin{aligned}
\nu(dv, dv') &= \mu(dv) Q(v, dv'), \quad  \mu(dv) = \mu(\Co^{-\half} du), \quad Q(v, dv') \,\textrm{being} \, \eqref{eq:infmMALA-whiten} \, \textrm{with approximations} \, \eqref{eq:apx_postcov2} \, \textrm{applied}\\
\nu_0(dv, dv') &= \mu_0(dv) Q_0(v, dv'), \quad \mu_0(dv) = \mathcal N(0, \Io), \quad Q_0(v, dv') \,\textrm{being} \, \eqref{eq:infmMALA-whiten} \, \textrm{with} \; \tilde v \; \textrm{replaced by} \; \xi \sim \mathcal N(0, \Io)
\end{aligned}
\end{equation*}
Then the Feldman-Hajek theorem \citep[Theorem 2.23 in][]{da14} can be applied to $\mathcal N\left(\frac{\sqrt h}2 \hat g(v)\right) \simeq \mathcal N(0, \Io)$, similarly as Theorem 3.5 of \cite{beskos2017}. 
The acceptance probability (c.f. Algorithm 3.7 $\infty$-mMALA in \cite{beskos2017}) can be obtained with $\Ko(u)$ replaced by $\aKo(v)$ and $g(u)$ replaced by $\hat g(v)$.
\end{proof}

\section{Proof. of Corollary \ref{cor:acpt_compare}}\label{apx:cor-acpt_compare}
\begin{proof}
We assume $\gamma_\perp=0$.
Substituting the approximations \eqref{eq:lr-apx} \eqref{eq:apx_postcov} \eqref{eq:apx_postcov2} into \eqref{eq:acpt_DRinfmMALA} yields
\begin{equation*}
\begin{aligned}
\log \lambda(w^*;v) =&
- \tfrac{h}{8} \langle -(D_r-I_r) \Vo_r^* v - D_r \gamma_r \Vo_r^* \nabla_v \Phi(v), \Lambda_r \Vo_r^* v - \gamma_r \Vo_r^* \nabla_v \Phi(v) \rangle \\
& + \tfrac{\sqrt{h}}{2} \langle \Lambda_r \Vo_r^* v - \gamma_r \Vo_r^* \nabla_v \Phi(v), \Vo_r^* w^* \rangle - \tfrac12 \langle \Vo_r^* w^*, \Lambda_r \Vo_r^* w^* \rangle 
- \tfrac12 \log |D_r| \\
%& + \tfrac12 \sum_{i=1}^r \log(1+\lambda_i)
=&
- \tfrac{h}{8} \Vert D_r^{\half} g_r(v) \Vert^2 
+ \tfrac{\sqrt{h}}{2} \langle g_r(v), \Vo_r^* w^* \rangle - \tfrac12 \Vert \Lambda_r^{\half} \Vo_r^* w^* \Vert^2
- \tfrac12 \log |D_r| \\
=&
- \half \left\Vert \frac{\sqrt h}{2} D_r^{\half} g_r(v) -  D_r^{-\half} \Vo_r^* w^* \right\Vert^2 + \half \Vert \Vo_r^* w^* \Vert^2 - \half \log |D_r|
\end{aligned}
\end{equation*}
where $g_r(v):= \Lambda_r \Vo_r^* v - \gamma_r \Vo_r^* \nabla_v \Phi(v)$.

On the other hand, with \eqref{eq:connection2dili}, we have
\begin{equation*}
\begin{split}
\frac{\sqrt h}{2} D_r^{\half} g_r(v) -  D_r^{-\half} \Vo_r^* w^* =&
\frac{\sqrt h}{2} D_r^{\half} \Lambda_r \Vo_r^* v -  D_r^{-\half} \Vo_r^* \frac{v'-\rho_0 v}{\rho_2} - \frac{\sqrt h}{2} D_r^{\half} \gamma_r \Vo_r^* \nabla_v \Phi(v) \\
=& (\rho_2\sqrt{D_r})^{-1} ( ( \rho_1 D_r \Lambda_r + \rho_0 ) \Vo_r^* v - \Vo_r^* v' - \rho_1 D_r \gamma_r \Vo_r^* \nabla_v \Phi(v) ) \\
=& (\rho_2\sqrt{D_r})^{-1} ( ( I_r - \rho_1 D_r ) \Vo_r^* v - \Vo_r^* v' - \rho_1 D_r \gamma_r \Vo_r^* \nabla_v \Phi(v) ) \\
=& - D_{B_r}^{-1} ( \Vo_r^* v' - D_{A_r} \Vo_r^* v - D_{G_r} \gamma_r \Vo_r^* \nabla_v \Phi(v) ) 
\end{split}
\end{equation*}
and
\begin{equation*}
\Vert \Vo_r^* w^*(v,v') \Vert^2 -\Vert \Vo_r^* w^*(v',v) \Vert^2 = \frac{\Vert \Vo_r^* v'-\rho_0 \Vo_r^* v \Vert^2 - \Vert \Vo_r^* v-\rho_0 \Vo_r^* v' \Vert^2}{\rho_2^2} = \Vert \Vo_r^* v' \Vert^2 - \Vert \Vo_r^* v \Vert^2
\end{equation*}
It completes the proof by substituting the above results into \eqref{eq:acpt_DRinfmMALA} and comparing with Equation 40 of \cite{cui16}.
\end{proof}

\section{Proof of Theorem \ref{thm:bounds}}\label{apx:thm-bounds}
\begin{proof}
Since $\Ho(v)$ is trace-class, without loss of generality, we can assume that the eigenvalues $\lambda_i \downarrow 0$ in \eqref{eq:eigv} monotonically decrease to $0$.
Denote $D=(\Lambda + \Io)^{-1}$, then
\begin{equation*}
\Ko(v) = (\Io + \Vo \Lambda \Vo^*)^{-1} = \Vo (\Io +  \Lambda)^{-1} \Vo^* = \Vo D \Vo^* = \Vo_r D_r \Vo_r^* + \Vo_\perp D_\perp \Vo_\perp^*
\end{equation*}
From \eqref{eq:reform-infmMALA} we have the difference between DR-$\infty$-mMALA and $\infty$-mMALA:
\begin{equation*}
\begin{aligned}
\Vert v'_\text{\tiny DR-$\infty$-mMALA} - v'_\text{\tiny $\infty$-mMALA} \Vert &\leq 
\rho_1 \Vert (\aKo(v)-\Ko(v))v \Vert + \rho_1 \Vert (\aKo(v)s(\gamma)-\Ko(v))\nabla_v\Phi(v) \Vert + \rho_2 \Vert (\aKo(v)^\half-\Ko(v)^\half)\xi \Vert \\
&\leq \rho_1 ( \Vert \Vo_\perp (I_\perp -D_\perp) \Vo_\perp^* \Vert \Vert v\Vert + \Vert \Vo_\perp (I_\perp\gamma_\perp -D_\perp) \Vo_\perp^* \Vert \Vert \nabla_v\Phi(v)\Vert ) + \rho_2 \Vert \Vo_\perp (I_\perp -D_\perp^\half) \Vo_\perp^* \Vert \Vert \xi\Vert \\
&\leq \rho_1 (\max (I_\perp -D_\perp) \Vert v\Vert + \max |I_\perp\gamma_\perp -D_\perp| \Vert \nabla_v\Phi(v)\Vert) + \rho_2 \max (I_\perp-D_\perp^\half) \Vert \xi\Vert \\
&\leq
\begin{cases}
\rho_1 \frac{\lambda_{r+1}}{\lambda_{r+1}+1}( \Vert v\Vert + \Vert \nabla_v\Phi(v)\Vert ) + \rho_2 \frac{\lambda_{r+1}}{\lambda_{r+1}+1 + \sqrt{\lambda_{r+1}+1}} \Vert \xi\Vert , & \gamma_\perp=1\\
\rho_1 \left( \frac{\lambda_{r+1}}{\lambda_{r+1}+1} \Vert v\Vert + \Vert \nabla_v\Phi(v)\Vert \right) + \rho_2 \frac{\lambda_{r+1}}{\lambda_{r+1}+1 + \sqrt{\lambda_{r+1}+1}} \Vert \xi\Vert, & \gamma_\perp=0
\end{cases}
\end{aligned}
\end{equation*}

As seen from Section \ref{sec:DR-infmMALA}, the proposal of DILI can be derived from \eqref{eq:reform-infmMALA} by replacing $\Ko(v)$ with $\tilde\Ko(v):=\Vo_r K_r \Vo_r^* + \Vo_\perp \Vo_\perp^*$.
%Let $\gamma_\perp=0$. 
Note immediately $\tilde\Ko(v)^\half=\Vo_r K_r^\half \Vo_r^* + \Vo_\perp \Vo_\perp^*$. Now we have
\begin{equation*}
\begin{aligned}
\Vert v'_\text{\tiny DR-$\infty$-mMALA} - v'_\text{\tiny DILI} \Vert &\leq 
\rho_1 \Vert (\aKo(v)-\tilde\Ko(v))v \Vert + \rho_1 \Vert (\aKo(v)-\tilde\Ko(v))s(\gamma)\nabla_v\Phi(v) \Vert + \rho_2 \Vert (\aKo(v)^\half-\tilde\Ko(v)^\half)\xi \Vert \\
&\leq \rho_1 \Vert \Vo_r (D_r-K_r) \Vo_r^* \Vert (\Vert v\Vert + \Vert \nabla_v\Phi(v)\Vert ) + \rho_2 \Vert \Vo_r (D_r^\half-K_r^\half) \Vo_r^* \Vert \Vert \xi\Vert \\
&\leq \rho_1 \Vert D_r-K_r \Vert_2 (\Vert v\Vert + \Vert \nabla_v\Phi(v)\Vert ) + \rho_2 \Vert D_r^\half-K_r^\half \Vert_2 \Vert \xi\Vert
\end{aligned}
\end{equation*}

Lastly, let's look at the difference between proposals of DR-$\infty$-mHMC and $\infty$-mHMC.
Denote $v_i^\text{\tiny DR}$ and $v_i$ as the $i$-th step parameter updates in DR-$\infty$-mHMC and $\infty$-mHMC respectively,
and $\tilde v_i^\text{\tiny DR}$ and $\tilde v_i$ as their corresponding $i$-th step auxiliary updates.
For simplicity we assume $\gamma_\perp=1$.
\begin{equation*}
\begin{aligned}
\Vert v_{i+1}^\text{\tiny DR} - v_{i+1} \Vert \leq& 
\cos\eps \Vert v_i^\text{\tiny DR} - v_i \Vert  + \sin\eps \left(\Vert \tilde v_i^\text{\tiny DR} - \tilde v_i \Vert + \frac{\eps}2 \Vert \hat g(v_i^\text{\tiny DR})-g(v_i)\Vert \right)\\
\leq& \cos\eps \Vert v_i^\text{\tiny DR} - v_i \Vert  + \sin\eps \left(\Vert \tilde v_i^\text{\tiny DR} - \tilde v_i \Vert + \frac{\eps}2 (\Vert \hat g(v_i^\text{\tiny DR})-g(v_i^\text{\tiny DR})\Vert+\Vert g(v_i^\text{\tiny DR})-g(v_i)\Vert) \right)\\
\leq& \left(\cos\eps + \frac{L\eps\sin\eps}2 \right) \Vert v_i^\text{\tiny DR} - v_i \Vert + \frac{\lambda_{r+1}\eps\sin\eps}{2(\lambda_{r+1}+1)}( \Vert v_i^\text{\tiny DR}\Vert + \Vert \nabla_v\Phi(v_i^\text{\tiny DR})\Vert ) + \sin\eps \Vert \tilde v_i^\text{\tiny DR} - \tilde v_i \Vert \\
\Vert \tilde v_{i+1}^\text{\tiny DR} - \tilde v_{i+1} \Vert \leq& 
\sin\eps \Vert v_i^\text{\tiny DR} - v_i \Vert  + \cos\eps \left(\Vert \tilde v_i^\text{\tiny DR} - \tilde v_i \Vert + \frac{\eps}2 \Vert \hat g(v_i^\text{\tiny DR})-g(v_i)\Vert \right) + \frac{\eps}2 \Vert \hat g(v_{i+1}^\text{\tiny DR})-g(v_{i+1})\Vert\\
\leq& \left(\sin\eps + \frac{L\eps\cos\eps}2 \right) \Vert v_i^\text{\tiny DR} - v_i \Vert + \frac{\lambda_{r+1}\eps\cos\eps}{2(\lambda_{r+1}+1)}( \Vert v_i^\text{\tiny DR}\Vert + \Vert \nabla_v\Phi(v_i^\text{\tiny DR})\Vert ) + \cos\eps \Vert \tilde v_i^\text{\tiny DR} - \tilde v_i \Vert \\
& + \frac{L\eps}2  \Vert v_{i+1}^\text{\tiny DR} - v_{i+1} \Vert + \frac{\lambda_{r+1}\eps}{2(\lambda_{r+1}+1)}( \Vert v_{i+1}^\text{\tiny DR}\Vert + \Vert \nabla_v\Phi(v_{i+1}^\text{\tiny DR})\Vert )
\end{aligned}
\end{equation*}

Denote $d_{i+1}:=\Vert v_{i+1}^\text{\tiny DR} - v_{i+1} \Vert$ and $\tilde d_{i+1}:=\Vert \tilde v_{i+1}^\text{\tiny DR} - \tilde v_{i+1} \Vert$. Then we have
\begin{equation*}
d_{i+1} \leq C_0 d_i + \mathcal O(\lambda_{r+1}) + C_1 \tilde d_i, \quad \tilde d_i \leq C_2 d_{i-1} + \mathcal O(\lambda_{r+1}) + C_3 \tilde d_{i-1} + C_4 d_i
\end{equation*}
Noticing that $d_0=\tilde d_0=0$, thus we can conclude that
\begin{equation*}
d_{I} = \Vert v_I^\text{\tiny DR} - v_I \Vert \leq \mathcal O(\lambda_{r+1}) \ .
\end{equation*}

\end{proof}

\end{document}